\theoremstyle{plain}
\newtheorem{theorem}{Theorem}[section]
\newtheorem{lemma}[theorem]{Lemma}
\newtheorem{definition}[theorem]{Definition}
\newcommand{\calU}{\ensuremath{\mathcal{U}}}
\newcommand{\calP}{\ensuremath{\mathcal{P}}}
\newcommand{\calD}{\ensuremath{\mathcal{D}}}
\newcommand{\cc}{\ensuremath{\mathrm{cc}}}
\newcommand{\BE}{\ensuremath{\mathrm{BE}}}
\newcommand{\RE}{\ensuremath{\mathrm{RE}}}
\newcommand{\dt}{\ensuremath{\mathrm{dt}}}
\newcommand{\CD}{\ensuremath{\mathrm{CD}}}
\newcommand{\D}{\ensuremath{\widetilde{D}}}
\newcommand{\Stream}{\ensuremath{\mathcal{S}}}
\newcommand{\rom}[1]{%
	\textup{\uppercase\expandafter{\romannumeral#1}}%
}
\newcommand{\CT}{\mathrm{CT}}
\newcommand{\E}{\mathrm{E}}
\newcommand{\LT}{\mathrm{LT}}
\newcommand{\radius}{\mathrm{radius}}
\newcommand{\dep}{\mathrm{depth}}
\newcommand{\lab}{\mathrm{label}}
\newcommand{\Bi}{\mathrm{Bi}}
\newcommand{\calH}{\mathcal{H}}
\newcommand{\calT}{\mathcal{T}}
\newcommand{\calE}{\mathcal{E}}
\newcommand{\Good}{\textbf{Good}}
\newcommand{\Bad}{\textbf{Bad}}
\newcommand{\junk}[1]{{}}
\newcommand{\disc}{\ensuremath{\mathrm{disc}}}
\newcommand{\Var}{\textrm{Var}}
\newcommand{\poly}{\ensuremath{\mathrm{poly}}}
\begin{document}

\title{\Large Estimating Graph Parameters from Random Order Streams}
\author{Pan Peng\thanks{Faculty of Computer Science, University of Vienna, Austria. 
		Email: \url{pan.peng@univie.ac.at}. The research leading to these results has received funding from the
		European Research Council under the European Union's Seventh
		Framework Programme (FP/2007-2013) / ERC Grant Agreement no. 340506.} \\
\and
Christian Sohler\thanks{Department of Computer Science, TU Dortmund, Germany. 
	Email: \url{christian.sohler@tu-dortmund.de}. Supported by ERC Starting Grant 307696.}}
\date{}

	\begin{titlepage}
	
	\maketitle
	
	\thispagestyle{empty}






\begin{abstract} 
We develop a new algorithmic technique that allows to transfer some constant time 
approximation algorithms for general graphs into random order streaming algorithms.
We illustrate our technique by proving that in random order streams with probability 
at least $2/3$,
\begin{itemize}
	\item
	the number of connected components of $G$ can be approximated up to an additive error 
	of $\varepsilon n$ using $(\frac{1}{\varepsilon})^{O(1/\varepsilon^3)}$ space,
	\item
	the weight of a minimum spanning tree of a connected input graph with integer edges weights
	from $\{1,\dots,W\}$ can be approximated within a multiplicative factor of $1+\varepsilon$ using
	$\big(\frac{1}{\varepsilon}\big)^{\tilde O(W^3/\varepsilon^3)}$ space,
	\item
	the size of a maximum independent set in planar graphs can be approximated within a multiplicative
	factor of $1+\varepsilon$ using space $2^{(1/\varepsilon)^{(1/\varepsilon)^{\log^{O(1)} (1/\varepsilon)}}}$.
\end{itemize}
\end{abstract}
\end{titlepage}

\section{Introduction}

The analysis of very large networks is still a big challenge. In particular, if the size of the network is so big that it does not fit
into the main memory of a computer and the edges appear as a data stream, we would like to be able to approximately analyze the network
structure. If the network is relatively dense, i.e. the number of edges is significantly bigger than $n \log^{O(1)} n$, where $n$ is the
number of vertices of the network, then semi-streaming algorithms, i.e. algorithms that process the edges in worst-case (or adversarial) order and
use $n \log^{O(1)} n$ memory provide an important space efficient (relative to the input size) approach\footnote{Throughout the paper, we will focus on single-pass streaming algorithms.}~\cite{HRR99:stream,FKMSZ05:graph,FKM08:distance}. 
However, for sparse networks it seems to be reasonable to require that a space efficient algorithm uses space $o(n)$.
In this case, if we only aim at slightly sublinear space, one can still approximate some network parameters \cite{BS15:matching,HP16:stream}. However,
if we would like our space requirement to be, say, $\log^{O(1)} n$ then it seems that only very few network parameters can
be approximated in the classical worst-case order model. One reason is that pointer chasing is hard in graph streams, but sparse graphs are typically explored by following paths \cite{FKM08:distance}. 

One way to still be able to develop some algorithms with reasonable guarantees
is to parametrize the input: This has, for example, been used to characterize the space complexity of
algorithms for approximating the number of specific small subgraphs of the input graph \cite{BKS02:reductions}. Also, there are algorithms available 
that parametrize the input in terms of the size of the optimal structure that is to be computed \cite{CCHM14:parameterized}. One might also be able to obtain space efficient streaming algorithms for some special class of graphs. For example, there exist constant approximation algorithms that use $O(\log n)$ words of space and approximate the size of the maximum matching for graphs with bounded aboricity~\cite{MV16:note,CJMM16:matching}.   

Another way is to relax the assumption that the edges come in worst-case order. A natural relaxation is to assume that the edges come
in random order, that is, the input stream is chosen uniformly at random from the set of all possible permutations of the edges~(e.g., \cite{CCM08:lower,KMM12:matching,KKS14:matching}). In a recent survey~\cite{McG14:stream}, McGregor asked the question if one can take advantage of the random order assumption to obtain more space-efficient streaming algorithms. In general, the answer to this question is unclear and heavily depends on the problem under consideration. On the negative side, it is known that to distinguish if a graph is connected or not, any random order streaming algorithm needs $\Omega(n)$ bits of space~\cite{CCM08:lower}, which nearly matches the corresponding  $\Theta(n\log n)$ upper/lower bounds in adversary order model~\cite{FKMSZ05:graph,SW15:tight}. On the other hand, there exists work that in some sense addresses the above question in the affirmative. For example, it is possible to approximate the size of a maximum matching upto a polylogarithmic factor 
in polylogarithmic space \cite{KKS14:matching} in the random order model, while there is no such algorithm known in the adversarial order model. More recently, together with Monemizadeh and Muthukrishnan, the authors showed that for graphs with
\emph{maximum degree bounded by a known constant $d$} any constant time property testing algorithm can be simulated in random order
streams using constant space~\cite{MMPS17:constant}.  

In this paper we will also follow the approach from \cite{MMPS17:constant} of transforming sublinear time algorithms into streaming algorithms for random order streams. However, in contrast to \cite{MMPS17:constant}, our work will focus on \emph{general graphs} without degree bound as well as classes of graphs such as planar graphs with \emph{bounded average degree}. The main finding of this paper is a new algorithmic technique (and its analysis) that allows to transfer many results from the area of 
sublinear time algorithms for general graphs and classes of graphs with bounded average degree to random order graph streaming algorithms. We illustrate the use of this technique on the examples of approximating the number of connected components, the weight of the minimum spanning tree of an arbitrary graph and the size of an independent set in a planar graph. 


\subsection{Our Results}

Our first result is an algorithm for approximating the number of connected components of the graph within additive error $\varepsilon n$. Testing connectivity and approximating the number of components of a graph is a basic primitive in the area of sublinear time algorithms and has been studied significantly \cite{GR02:testing,BKM14:NCC} and used as a subroutine in other sublinear algorithms such as \cite{CRT05:MST,CS09:metricMST}. 

\begin{theorem}
	Let $0 < \varepsilon < 1/2$. Then there is an algorithm that takes as input a random order stream of edges from a graph $G=(V,E)$ and computes an output value $\hat c$ that with probability at least
	$2/3$ approximates the number $c$ of connected components of $G$ with an additive error of $\varepsilon n$,
	i.e. we have
	$$
	| \hat c -c | \le \varepsilon n.
	$$
	The algorithm uses $\big(\frac{1}{\varepsilon}\big)^{O(\frac{1}{\varepsilon^3})}$ words of space.
\end{theorem}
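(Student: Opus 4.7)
The plan is to mimic the classical sublinear-time connected-components estimator of Chazelle, Rubinfeld, and Trevisan~\cite{CRT05:MST} in the random-order streaming model. The starting point is the identity $c = \sum_{v \in V} 1/n_v$, where $n_v$ denotes the size of the component containing $v$. Setting $t = \lceil 2/\varepsilon \rceil$, the truncated sum $\hat c := \sum_{v \in V} 1/\min(n_v, t)$ satisfies $|\hat c - c| \le \varepsilon n / 2$, since each summand shifts by at most $\varepsilon/2$. I would then draw a uniform sample of $s = \Theta(t^2/\varepsilon^2)$ vertices $v_1, \dots, v_s$ in advance; since each term $1/\min(n_{v_i}, t)$ lies in $[1/t, 1]$, Hoeffding's inequality guarantees that the empirical mean, scaled by $n$, approximates $\hat c$ to additive error $\varepsilon n/2$ with probability at least $2/3$. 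Hence the whole problem reduces to computing $\min(n_{v_i}, t)$ exactly, with high probability, from a single random-order pass for each sampled vertex.

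The technical heart is a streaming simulation of a truncated BFS from each $v_i$. I would maintain a growing set $K_i \subseteq V$ of vertices currently known to be connected to $v_i$ (initialized to $\{v_i\}$), together with a buffer $B_i$ of candidate edges whose endpoints are not yet in $K_i$ but might later be absorbed. When an edge $(u, w)$ arrives, if it touches $K_i$ I absorb the new endpoint into $K_i$ and then scan $B_i$ to perform a local closure (iterating until no new vertex joins $K_i$); otherwise $(u, w)$ is appended to $B_i$ unless we can already certify that neither endpoint can belong to a component of size at most $t$. As soon as $|K_i|$ exceeds $t$ we output $t$ for $v_i$; otherwise we output the final $|K_i|$. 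At the end, the estimator $\tilde c := (n/s) \sum_i 1/\min(|K_i|, t)$ is returned.

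The main obstacle is bounding the size of each buffer $B_i$ while guaranteeing that no relevant edge is ever discarded. The essential leverage is that the stream is a uniformly random permutation of $E$: any vertex $u$ lying in a component of size at most $t$ has degree at most $t - 1$, so under the random permutation, its total number of incident edges is small and their positions in the stream are exchangeable. I would formalize a discard rule of the form ``drop $(u, w)$ (and ignore any future edges touching $u$ or $w$) as soon as $u$ or $w$ has been observed on more than a threshold $T$ many edges,'' where $T$ is chosen so that, by a Chernoff bound on the hypergeometric distribution of edge appearances, no vertex belonging to a component of size $\le t$ is ever falsely discarded. Balancing the threshold against the number of samples $s$ and the maximum component size $t$ being explored, and taking a union bound over all $s$ samples, all BFS expansion steps, and all stream positions, is where I expect the bulk of the work to lie; the calculation should produce the claimed $(1/\varepsilon)^{O(1/\varepsilon^3)}$ space bound. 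The most delicate point is that the buffering decisions, the BFS growth, and the random permutation are all coupled, so the concentration arguments must hold uniformly over the entire adaptive process rather than at a single prefix length.
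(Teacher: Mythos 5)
Your overall reduction (truncate the Chazelle--Rubinfeld--Trevisan estimator at $t=\Theta(1/\varepsilon)$, sample $\Theta(\mathrm{poly}(1/\varepsilon))$ vertices, and estimate $\min(n_{v_i},t)$ for each) is exactly the standard framing and is fine; the paper's algorithm is, in essence, also estimating, for each $k\le 2/\varepsilon$, the number of size-$k$ components from a vertex sample. The gap is in the streaming primitive you propose for computing $\min(n_{v_i},t)$, and I don't think it can be repaired.

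The obstacle is your buffer $B_i$. As defined, $B_i$ must hold every edge $(u,w)$ with both endpoints outside $K_i$ that you cannot yet certify is irrelevant. In a one-pass stream you have no way to certify irrelevance: consider the graph consisting of $n/2$ disjoint edges. Until $v_i$'s unique edge arrives, every one of the other $n/2-1$ edges is a candidate (each of its endpoints could, for all you know, later turn out to be adjacent to $v_i$), so $|B_i|=\Omega(n)$ at the moment just before $v_i$'s edge appears. Your degree-based discard rule does not help here because every vertex in this graph has degree $1<t$, so no vertex is ever flagged. More generally, any graph consisting of $\Omega(n/t)$ small components defeats the filter: all vertices are low-degree, yet $\Omega(n)$ edges remain candidates. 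The random-order assumption gives you no leverage against this, since the problem is not that some unlucky prefix is bad but that at a typical point in the stream there are $\Omega(m)$ not-yet-resolvable edges. Trying to compute $\min(n_{v_i},t)$ \emph{exactly} in one pass with sublinear space is in fact impossible for essentially this reason.

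The paper sidesteps this by giving up on exact recovery. It never buffers edges away from the current tree: the per-sample state is just a rooted tree on at most $k$ vertices. It grows that tree only when the next arriving edge is consistent with a unique ``canonical BFS'' order, and it immediately discards the sample if a witness (``violating'') edge is seen. This means most samples in genuinely small components are \emph{not} detected---the required edges must arrive in one very specific order within a short prefix, which happens only with some tiny known probability $\gamma_k$---but because $\gamma_k$ is computable, the algorithm rescales the observed count by $1/\gamma_k$. The two nontrivial points it must then handle are (i) calibration: the detection probability must be (almost) the same for every vertex in a size-$k$ component, which is why a single canonical tree per vertex is used rather than ``any spanning tree''; and (ii) false positives: a vertex in a large component can spuriously look accepted, and since the number of candidate trees through a high-degree vertex can grow with $n$, one needs a careful counting argument showing these are suppressed by the extra $\tau^b$ factor from the unseen boundary edges. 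Neither of these issues arises in your plan because you were aiming for exact recovery---but that aim is what makes the space blow up.
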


In comparison to the above result, Chakrabarti et al.~\cite{CCM08:lower} presented a $\Omega(n)$ space lower bound for any single-pass algorithm that tests if a graph is connected or not in random order streams, which suggests that approximating the number of connected components within a multiplicative factor $2-\delta$ will already require $\Omega(n)$ space, for any $\delta>0$. 

Following the ideas of \cite{CRT05:MST} an approximation algorithm for the number of connected components
can be transformed into an algorithm for approximating the weight of the minimum spanning tree of 
the input graph when we have integer weights from a bounded set $\{1,\dots, W\}$ (using a simple
rounding procedure this can also be extended at the cost of a slightly higher space complexity to real weights from the same range).

\begin{theorem}~\label{thm:weight_mst}
	Let $0 < \varepsilon < 1/2$. Then there exists an algorithm that takes as input a random order stream
	of weighted edges from a connected graph $G=(V,E)$ whose edge weights are from the universe $\{1,\dots,W\}$
	and with probability at least $2/3$ computes an estimate $\hat M$ 
	for the weight $M$ of the minimum spanning tree of $G$ that satisfies
	$$
	(1-\varepsilon) M \le \hat M \le (1+\varepsilon) M.
	$$
	The algorithm uses $\big(\frac{1}{\varepsilon}\big)^{\tilde O(W^3/\varepsilon^3)}$ words of space.
\end{theorem}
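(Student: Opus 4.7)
The plan is to follow the Chazelle--Rubinfeld--Trevisan reduction from MST weight to connected component counts in threshold subgraphs, and then run Theorem 1.1 in parallel on the corresponding substreams. Concretely, for each $w \in \{0,1,\dots,W\}$, let $G^{(w)}$ denote the subgraph of $G$ consisting of the edges of weight at most $w$, and let $c^{(w)}$ be its number of connected components (with isolated vertices counted). Since $G$ is connected, $c^{(W)} = 1$ and $c^{(0)} = n$, and Kruskal's algorithm combined with a standard telescoping argument yields the identity
\[
M \;=\; n - W + \sum_{w=1}^{W-1} c^{(w)}.
\]
Because $G$ is connected we also have $M \ge n-1$, so it suffices to compute each $c^{(w)}$ within additive error $\varepsilon' n$ for $\varepsilon' = \Theta(\varepsilon/W)$; summing the estimates will then give the required multiplicative $(1\pm\varepsilon)$-approximation of $M$.

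The central observation is that if the input is a uniformly random permutation of the (weighted) edges of $G$, then for every fixed weight threshold $w$ the subsequence of edges of weight at most $w$ is itself a uniformly random permutation of $E(G^{(w)})$. Hence, by filtering the stream on the fly with the predicate ``weight $\le w$'', we simulate a random order edge stream of $G^{(w)}$ on the same vertex set. We can therefore run, in parallel, $W-1$ independent copies of the connected-components algorithm from Theorem 1.1, the $w$-th copy processing only the edges of weight at most $w$ with error parameter $\varepsilon' = \varepsilon/(2W)$. After the stream ends, we output
\[
\hat M \;=\; n - W + \sum_{w=1}^{W-1} \hat c^{(w)},
\]
where $\hat c^{(w)}$ is the estimate returned by the $w$-th copy.

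For the space analysis, each copy uses $(1/\varepsilon')^{O(1/\varepsilon'^3)} = (W/\varepsilon)^{O(W^3/\varepsilon^3)}$ words; running $W$ of them in parallel and then boosting the per-copy failure probability down to $1/(3W)$ by taking the median of $O(\log W)$ independent repetitions (so that a union bound yields overall success probability at least $2/3$) multiplies the space by $W\log W$, which is absorbed into the $\tilde O(\cdot)$ notation and produces the claimed bound $(1/\varepsilon)^{\tilde O(W^3/\varepsilon^3)}$. The error bound is then immediate: on the good event all $\hat c^{(w)}$ satisfy $|\hat c^{(w)} - c^{(w)}| \le \varepsilon' n$, so $|\hat M - M| \le (W-1)\varepsilon' n \le \varepsilon n/2 \le \varepsilon M$.

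The main subtlety I expect is not the telescoping identity but verifying that the substream induced by the threshold predicate is genuinely random-order for the algorithm of Theorem 1.1: one must argue that restricting a uniformly random permutation to a fixed subset yields a uniformly random permutation of that subset, and that the guarantees of Theorem 1.1 depend only on this distribution and on the vertex count $n$, not on the total number of edges or on any knowledge of edges outside the threshold. Once this is in place, the boosting step and the error accounting are routine. A minor extension one may want to record is that real-valued weights in $[1,W]$ can be handled by preliminary rounding to a $(1+\varepsilon)$-multiplicative grid at the cost of a logarithmic blow-up in the effective $W$, which is again absorbed by the $\tilde O(\cdot)$ in the exponent.
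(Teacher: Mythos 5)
Your proposal is correct and follows essentially the same route as the paper: the CRT identity $M = n - W + \sum_{t=1}^{W-1} c(t)$, running the connected-components estimator on the threshold-filtered substreams (each of which is indeed a uniformly random permutation of $E(G^{(t)})$), and summing. The only cosmetic difference is that you boost the per-copy success probability to $1-1/(3W)$ by a median-of-$O(\log W)$ trick, whereas the paper passes $\rho/W$ directly into its parametrized \textsc{NumCC} subroutine; both yield the same space bound up to the $\tilde O(\cdot)$ factors.
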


It is worth pointing out that in the worst-case order model, there are lower bounds $\Omega(n^{1-O(\varepsilon)})$ and $\Omega(n^{1-O(\varepsilon/W)})$ on the space complexity for the above two problems, respectively~\cite{HP16:stream}. In contrast, our random order streaming algorithms use space independent of $n$.

Our third result is a random order streaming algorithm for $(1+\varepsilon)$-approximating the size of maximum independent set in minor-free graphs (which include planar graphs).
\begin{theorem}~\label{thm:planar_mis}
	Let $G=(V,E)$ be a $K_t$-minor-free graph for any constant $t>0$. Given a parameter $0<\varepsilon<1/2$, there exits a random order graph stream algorithm that $(1+\varepsilon)$-approximates the size of the maximum independent set of $G$ with probability at least $2/3$ while using 
	$2^{(1/\varepsilon)^{(1/\varepsilon)^{\log^{O(1)} (1/\varepsilon)}}}$ words of space. 
\end{theorem}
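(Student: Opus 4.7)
The plan is to combine a known constant-time $(1+\varepsilon)$-approximation algorithm for maximum independent set on bounded-degree $K_t$-minor-free graphs (Hassidim, Kelner, Nguyen, and Onak, based on partition oracles) with the random-order streaming simulation framework that the paper developed for the previous two theorems. The sublinear algorithm has query complexity $2^{(1/\varepsilon)^{(1/\varepsilon)^{\log^{O(1)} (1/\varepsilon)}}}$ in the bounded-degree setting, which matches the target space bound; the framework supplies a mechanism for executing a small number of local explorations rooted at random vertices using a buffered read of a random-order edge stream at essentially the same cost as the algorithm's query complexity.

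A preparatory step is required because $K_t$-minor-free graphs have only bounded \emph{average} degree, not bounded maximum degree, whereas the sublinear algorithm assumes the latter. Fix a threshold $D := \poly(1/\varepsilon)$. Since a $K_t$-minor-free graph has at most $O(n)$ edges (for constant $t$), at most $\varepsilon n/4$ vertices have degree greater than $D$. Deleting these ``heavy'' vertices from $G$ changes the maximum independent set by at most an additive $\varepsilon n/4$. By the Kostochka--Thomason bound on the chromatic number, $\alpha(G)\ge n/\chi(G)=\Omega(n)$ for constant $t$, so this additive error is at most $O(\varepsilon)\cdot\alpha(G)$ and is absorbed into the claimed $(1+\varepsilon)$ multiplicative guarantee. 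The residual graph $G'$ is still $K_t$-minor-free and now has maximum degree $D$, so the sublinear algorithm applies to it directly.

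The simulation step runs the sublinear algorithm inside the streaming framework on $G'$. Heavy vertices are detected online by maintaining edge counters on the currently active vertices: any vertex whose counter exceeds $D$ is declared heavy, is excluded from the estimator, and its incident edges are treated as terminating any BFS exploration that reaches it. The main obstacle will be to argue that this online truncation faithfully mimics the offline deletion used to define $G'$: a sampled vertex's local BFS might encounter a heavy vertex on its frontier before the stream has revealed enough of that vertex's incident edges to classify it, and we cannot afford to buffer all edges at a heavy vertex. I would handle this by exploiting the random-order property to show that, with high probability, every heavy vertex encountered within the relevant BFS radius of any sampled root sees at least $D$ incident edges sufficiently early in the stream to be classified before the exploration needs to branch through it; a union bound over the sample then yields that the simulated transcripts coincide with those the sublinear algorithm would see on $G'$. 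Correctness then follows from the correctness of the sublinear algorithm, and the final space bound is the product of the sample size and the per-sample exploration cost, both dominated by $2^{(1/\varepsilon)^{(1/\varepsilon)^{\log^{O(1)} (1/\varepsilon)}}}$.
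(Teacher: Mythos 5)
Your proposal correctly identifies the starting point — simulating the HKNO09 local-partition-oracle algorithm and pre-processing away heavy vertices — but your plan for handling heavy vertices in the stream has a genuine gap that the paper resolves in a substantively different way.

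You propose to classify a vertex as heavy ``online'' by counting its incident edges and declaring it heavy once the count exceeds $D$, and then argue that with high probability every heavy vertex reached by a sampled root's BFS is classified early enough. This argument is unlikely to go through. A heavy vertex $u$ with degree, say, $D+1$ becomes classifiable only after $D+1$ (or $D$) of its incident edges have appeared, and there is no reason the random order would place enough of them in the prefix that your BFS consumes; the conditional probability of early classification does not improve as $n$ grows, so a union bound over all heavy vertices near your sampled roots does not help. Moreover, while the BFS is waiting to classify $u$, it either has to buffer $u$'s incident edges (which could be $\omega(1)$, blowing up space) or commit to an exploration decision that may later be revealed to be wrong, and your proposal does not specify how to recover from that.

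The paper sidesteps online classification entirely. It defines the \emph{extended $(d+1)$-bounded $k$-disc} $\CD_{k,d+1}(v,G)$, a deterministic combinatorial object that collects at most $d+1$ lexicographically smallest edges at each vertex within radius $k$. A vertex is recognized as heavy \emph{after the fact}, by observing $d+1$ collected edges at it inside the disc, rather than by a counter threshold applied during the stream. The $d$-bounded $k$-disc in $G_{|d}$ is a deterministic function of the extended disc (just delete the vertices holding $d+1$ edges), so it suffices to estimate the frequency of each extended disc type. The rest of the argument then parallels the connected-components analysis: a two-phase stream with a binomially distributed prefix length, canonical lexicographic exploration with violating-edge detection, and a bound (Lemma~\ref{lemma:num_disc} together with the boundary-size decomposition) on the probability of accepting a false positive. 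Your proposal as written is missing this reduction and would need to supply an argument filling exactly the hole you flagged; the extended-disc reformulation is the idea that makes that hole disappear.
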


We remark that the above result is obtained by applying our new algorithmic technique to approximate the frequency of non-isomorphic copies of bounded-degree neighborhoods (see also discussions in Section~\ref{sec:comparison}). Our approach here is very general and can be also extended to transform many existing constant-time approximation algorithms (that are given query access to the adjacency list of the graph) for graphs with \emph{bounded average degree} into constant-space random order streaming algorithms. 
%
%
%
%

Finally, we would like to mention that our newly developed technique can also be used to derive the main result of~\cite{MMPS17:constant}, i.e., any bounded degree graph property that is constant-query testable in the query access (to the adjacency list) model can be tested with constant space in random order streaming model. This is true since to test these properties in bounded degree graphs, it suffices to approximate the frequency of non-isomorphic copies of some neighborhoods~\cite{CPS16:testing,GR11:proximity}. 
We omit further details for this result here.

%


\subsection{Our Techniques}\label{sec:techniques}
A major obstacle in designing graph streaming algorithms is that standard graph exploration 
algorithms like Breadth First Search (BFS) and Depth First Search (DFS) suffer from the fact that the edges typically do not appear in the 
order of the graph traversal and thus may be missed. This makes it impossible for many problems to design graph streaming algorithms that in the \emph{worst-case} require only constant or polylogarithmic space. If the 
edges arrive in \emph{random order} there is a small probability that they appear in the same order 
as in a given graph traversal. However, typically this will not be the case as well. One of the main challenges for designing random order streaming algorithms can therefore be viewed as the task
to identify when the graph traversal behaves as in the original graph and when it does not.
If we can solve this problem we can often simulate constant time algorithms in random order streams. 

The algorithmic technique for designing streaming algorithms from constant time algorithms, which we develop in this paper, exploits that many sublinear time algorithms may be viewed as 
computing their output from an estimation of the frequency of certain subgraphs of interest like, for example, connected components or $k$-discs (i.e., the subgraphs induced by vertices at distance at most $k$ from any given vertex).
Since the algorithms require constant time, we also know that the subgraphs of interest are of constant 
size. For simplicity, we will from now on assume that we are looking for subgraphs that have 
exactly $k$ vertices. In the following we will use the example of approximating the number of connected
components of $G$ to illustrate our ideas. 
In this case, subgraphs of interest are connected graphs on exactly $k$ vertices that have no outgoing edges, i.e. connected components of size $k$.

Let us consider such a connected component $C$
with vertex set $V_C$ and edge set $E_C$ and let $v\in V_C$ denote a vertex of $C$. As mentioned above, in random order streams with 
some small probability (over the choice of the random order), the edges of $C$ arrive in the order such that a greedy graph traversal (whenever a new
edge connects our current connected component to a new vertex, we add this new vertex to the current connected component) from $v$ will see the 
whole component. The challenging part is that we cannot distinguish the case that the greedy traversal finds all vertices of a connected component 
from the case that $v$ belongs to a larger connected component and the greedy traversal only discovers edges incident to $k$ vertices (this may happen because 
other edges arrive in the stream before their endpoints have been discovered by the greedy traversal starting from $v$). The only way
to still be able to estimate the number of connected components is to use more involved statistics of the discovered subgraphs.
Our previous work \cite{MMPS17:constant} provided a solution for graphs whose maximum degree is bounded by a known constant $d$. However, this solution does 
not seem to be extendable to the setting where no degree bound is available (see more discussions in Section~\ref{sec:comparison}).

Our new algorithm is parametrized by a small constant $\lambda>0$ and divides the stream into two parts: The first $\lambda m$ edges (called the \emph{first phase}) and the remainder of the stream (called the \emph{second phase}), where $m$ denotes the overall number of edges. We then check, whether the structure we are looking for (for example, a fixed spanning tree $T$ of a connected component of size $k$) is observed by a greedy graph traversal from some vertex $v$ using only edges of the first phase. If so, it is possible that $T$ belongs to a connected component of size larger than $k$, i.e., $T$ (or $v$) is a false positive, in which case we still have the same probability to observe $T$. However, there is also at least one edge $e$ that witnesses that $T$ does not span a connected component of size $k$. In order to find this edge and reveal $T$ as a false positive, we scan the edges of the second phase. The interesting observation is that the conditional probability that $e$ is not in the second phase of the stream is 
$$
\Pr\{e \in M \; | \; T \subseteq M\} \approx \lambda,
$$
where $M$ denotes the set of edges in the first phase.  
Therefore, it is unlikely that we accidentally keep a false positive (if $\lambda$ is sufficiently small). 
Using a more complex form of the above observation (that also takes the order of edges in $T$ from the stream into account) we can prove that the conditional probability of discovering $T$ but not seeing $e$ lateron in the
stream is at most $\lambda$.
The main challenges for the analysis are that
\begin{enumerate}[label={(\alph*)}]
	\item\label{item:challenge_a}
	in the context of vertices with unbounded degree, the number of trees of size $k$ that are rooted at a given vertex can be a function of $n$. Since every such tree is potentially a false positive we need to 
	argue carefully that the probability for a false positive is sufficiently small.
	\item\label{item:challenge_b}
	a more subtle point is that the probability of finding a spanning tree of a connected component
	depends on the structure of the connected component. This implies in particular, that using the
	approach sketched above the distribution of the vertices that are accepted after the first phase
	is no longer uniform, which makes it difficult to obtain an estimate for the number of connected
	components from this. In order to address this problem, we will only consider a very specific
	type of spanning tree, which we call canonical BFS tree.
\end{enumerate}

Finally, let us observe that our technique is not specifically tailored to the problem of approximating
the number of connected components. We show that it can also be used to approximate the distribution of
$k$-discs in the graph $G'=(V,E')$ that is obtained from the input planar graph (or graph with bounded average degree) by removing all 
vertices with degrees that exceed a certain constant threshold $t$. It is known that a planar graph has $O(n/t)$
such vertices and a maximum independent set of size $\Omega(n)$.
From this distribution we know the structure of the graph upto $\varepsilon n$ edges \cite{NS13:hyperfinite,HKNO09:local} and we can therefore estimate the size of a maximum independent set.
Obviously, this approach immediately translates to any other problem whose objective value is
$\Omega(n)$ in planar graphs and does not change by more than $O(1)$ when a vertex is deleted or an 
edge is deleted or inserted. We believe that our approach can also be extended to a number of other 
problems.

\subsection{Comparison to \cite{MMPS17:constant}}\label{sec:comparison}

In our previous work with Monemizadeh and Muthukrishnan we provided a technique to transform
a constant time property tester for bounded degree graphs into a random order streaming algorithm
(the technique also works for constant-time algorithms). The central idea was to show that
in random order streams one can approximate the distribution of $k$-discs when the input graph has bounded maximum degree in the following way (which builds on earlier ideas from \cite{CPS16:testing}).
We sample a set of vertices uniformly at random and consider the $k$-discs observed by doing a BFS 
in a random order stream. Due to the random order we will typically miss edges, so any observed
$k$-disc is typically a subgraph of the true $k$-disc. However, since we have maximum degree bounded by $d$, there is a maximal $k$-disc, i.e. a $k$-disc in which every vertex has degree $d$ and we can
identify, if we have sampled all edges from such a disc. This can be used to approximate the frequency of maximal $k$-discs. Furthermore, for any $k$-disc type we can approximate the probability that we observe this type
given that the true $k$-disc is maximal. Using this together with the observed number of maximal
$k$-discs we can compute the number of $k$-discs that have one edge less than a maximal $k$-disc
from the observed number of such $k$-discs. By defining a suitable partial order on $k$-discs
one can use this approach to approximate the whole distribution of $k$-discs, which can be 
used to simulate property testers or constant time algorithms with query complexity at most $k$.

The above approach relies crucially on the assumption of a known degree bound and we do not believe
that it can be generalized to arbitrary graphs. This is indicated by the fact that in the property testing setting \cite{CPS16:testing}, where a similar approach had been used to approximate the distribution of
directed $k$-discs (in a scenario where only outgoing edges can be seen from a vertex), it has been shown that one cannot extend this approach to graphs with bounded average degree and obtain similar guarantees.

We also remark that in the full version of~\cite{MMPS17:constant}, we developed a 2-pass algorithm that approximates the distribution of 
$k$-discs in bounded degree graphs using an approach that on a high level is somewhat similar 
to our approach in the sense that it uses the first pass to sample a $k$-disc and the second pass to
verify, if edges have been missed. However, if one allows two passes and the graph is of bounded degree, then the verification step is easy
while for one pass and graphs with arbitrarily large maximum degree this is one of the main challenges.

\subsection{Other Related Work}
For random order streams, Chakrabarti et al.~\cite{CCM08:lower} gave $\Omega(n)$ space lower bound for any single-pass algorithm that tests if a graph is connected or not and $\Omega(n^{1+1/t})$ space lower bound for any single-pass algorithm that tests if two given vertices have distance at most $1$ or at least $t+1$. Konrad et al.~\cite{KMM12:matching} gave random order streaming algorithms with space complexity $\tilde{O}(n)$ for the maximum matching problem for bipartite graphs and general graphs with approximation ratio better than $2$. Kapralov et al.~\cite{KKS14:matching} gave a $\poly\log n$-approximation algorithm for the size of maximum matching of general graphs using $\poly\log n$ space in random order streams. For graphs with bounded arboricity, Esfandiari et al.~\cite{EHLMO14:matching} have given a constant approximation algorithm for the size of maximum matching using $\tilde{O}(\sqrt{n})$ space in random order streams. Later, an improved algorithm with space complexity $O(\log n)$ that even holds for adversary order streams has been given \cite{CJMM16:matching,MV16:note}. 

In the following, we use the term $(\alpha,\beta)$-approximation algorithm to denote an algorithm that approximates the
objective with a guarantee of $\alpha \textrm{Opt} + \beta$, where $\textrm{Opt}$ denotes the cost of an optimal solution.
Chazelle et al.~\cite{CRT05:MST} gave a $(1,\varepsilon n)$-approximation algorithm in the \emph{query access} (to the adjacency list) model for the number of connected components that runs in time $O(\bar{d}\varepsilon^{-2}\log\frac{\bar{d}}{\varepsilon})$, and a $(1+\varepsilon)$-approximation algorithm with running time $O(\bar{d}W\varepsilon^{-2}\log\frac{\bar{d}W}{\varepsilon})$ for the minimum spanning tree weight for graphs with average degree $\bar{d}$ and edge weights in the set $\{1,\cdots, W\}$. 
Berenbrink et al.~\cite{BKM14:NCC} showed that for graphs without multiple edges, one can $(1,\varepsilon n)$-approximate the number of connected components in time $O(\frac{1}{\varepsilon^2}\log (\frac{1}{\varepsilon}))$. 

There exists $(1,\varepsilon n)$-approximation algorithm in the query access model for estimating the sizes of a minimum vertex cover, maximum dominating set, and maximum independent set for any class of graphs with a fixed excluded minor with running time $2^{\poly(1/\varepsilon)}$~\cite{HKNO09:local}. Note that the maximum degree of such a graph might be arbitrarily large, while the average degree is always bounded by a constant. 

Onak et al.~\cite{ORRR12:vetexcover} gave a $(2,\varepsilon n)$-approximation for the size of minimum vertex cover with running time $\tilde{O}(\bar{d}\cdot \poly(1/\varepsilon))$, which improves upon previous results by Parnas and Ron~\cite{PR07:sublinear_distributed}, Marko and Ron~\cite{MR09:distance}, Nguyen and Onak~\cite{NO08:constant} and Yoshida, Yamamoto and Ito~\cite{YYI12:constant}. 

Other parameters of the sparse graphs can also be approximated with running time $f(\bar{d},\varepsilon)$, which is independent of the input size $n$ if $\bar{d}$ is bounded by some constant. (Most of such results are stated in terms of bounded maximum degree $d$, while they can generalized to graphs with bounded average degree $\bar{d}$ by ignoring all vertices with degree greater than $\bar{d}/\varepsilon$.) Examples include $(1,\varepsilon n)$-approximation for maximal/maximum matching~\cite{NO08:constant,YYI12:constant}, $(O(\log \bar{d}/\varepsilon),\varepsilon n)$-approximation for maximum dominating set~\cite{PR07:sublinear_distributed,NO08:constant}.

\section{Preliminaries}

Let $G=(V,E)$ be an undirected simple graph. We will assume that the vertex set $V$ of the
graph is $[n]:=\{1,\dots,n\}$ and that $n$ is known to the algorithm. We sometimes use lowercase letters (e.g., $u,v,x$) to denote the vertices, and in such cases, we let $\lab(u)\in [n]$ denote the index of the vertex $u$. We let $\Stream(G)$ denote the input stream of edges that define the graph $G$. We consider streaming
algorithm for random order streams, i.e. the input to our algorithm is a sequence of edges
that is chosen uniformly at random from the set of all permutations of the edges in $E$.
We are interested in algorithm that use constant or polylogarithmic space in the size of the
graph, where we count the number of words, i.e. if one is interested in the number of bits
used by the algorithm the space bounds have to be multiplied by a factor of $O(\log n)$.

\section{Approximating the Number of Connected Components (CCs)}

We start with an overview of the algorithm and the main ideas behind it. For a given value $k$ our algorithm will approximate the number of connected
components with exactly $k$ vertices with an additive error up to $\delta n$. In order to approximate the number of connected components with additive
error $\varepsilon n$ we approximate the number of small (of size at most $2/\varepsilon$) connected components by running our algorithm for each small value of 
$k$ such that the overall estimation error is at most $\varepsilon n/2$. Since the number of connected components of size more than $2/\varepsilon$ can be at 
most $\varepsilon n/2$ this gives the desired estimate.

By the considerations given in Section~\ref{sec:techniques}, in order to estimate the number of connected components of size $k$, we can use the following the approach: sample a sufficiently large number of vertices, keep only the sample vertices from which the greedy traversal discovers exactly $k$ vertices when processing the first $\lambda m$ edges, and check whether there
are outgoing edges in the remainder of the stream. However, there are challenges \ref{item:challenge_a} and \ref{item:challenge_b} for the analysis given in Section~\ref{sec:techniques}. We address challenge \ref{item:challenge_a} by using careful counting arguments and show that the probability of missing a witness for any false positive is sufficiently small. The second challenge~\ref{item:challenge_b} is more subtle: The probability of discovering $k$ vertices depends on the structure of the connected component. This implies that the set of vertices from which we discover exactly $k$ vertices is no longer uniformly distributed. This is one of the technical difficulties we need to overcome.


We address the issue by defining a canonical breadth first search (CBFS) and we accept a vertex only if this CBFS
discovers exactly $k$ vertices. From every vertex there is exactly one CBFS, which means that if we could verify
whether the $k$ vertices were discovered during a CBFS and accept only then we would still get a uniform distribution. 
The problem is that, similarly as in the previous discussion, there is a chance that we are discovering a set of edges 
that looks like a CBFS, but we have missed some edges that make our CBFS act differently. However, we can again overcome this 
problem by checking whether such an edge appears in the remainder of the stream. By similar arguments as above, it
turns out that the probability to have missed an edge conditioned on the event that our CBFS discovers $k$ vertices
is small. This implies that our distribution is almost uniform and so we can do the following approach:
\begin{itemize}
	\item
	Step 1: Sample a set $S$ of vertices.
	\item
	Step 2: Process the first $\lambda m$ edges of the stream and perform a CBFS from all vertices from $S$. If a vertex discovers
	more than $k$ vertices, then drop it as the component has size bigger than $k$.
	\item
	Step 3: Let $S^*$ be the set of vertices from which CBFS explored exactly $k$ vertices. For each $v\in S^*$ denote by $C_v$ the 
	connected component CBFS discovered so far.
	\item
	Step 4: Read the remainder of the stream. For each $v\in S^*$: If there is at least one outgoing edge from $C_v$ then set $X_v = 0$; else set $X_v=1$.
	\item
	Step 5: Extrapolate the number of connected component of size $k$ from $\sum_{v \in S^*} X_v$.
\end{itemize}

We also observe that in the above sketch we need to know the number of edges $m$ in advance. One can remove this assumption by guessing the number of edges upto a constant factor and having an instance for each guess, while this will blow up the space requirement by a $\Theta(\log n)$ factor. 
Another technical issue is that calculating the probabilities of being among the first $\lambda m$ edges can be cumbersome because under 
our conditioning it is not exactly $\lambda$. The following elegant modification of the algorithm will greatly simplify the analysis and resolve these two issues: Instead of
considering the prefix of $\lambda m$ edge we pick the length of the prefix from a suitable binomial distribution. This allows us to think of the prefix
as well as the random permutation defined by the following process: For each edge we choose a random priority from $[0,1]$. The random ordering
is then given by sorting the edges according to increasing priorities. Furthermore, the prefix is defined by all edges whose priorities 
are below a given threshold $t$. This is equivalent to choosing a random permutation and drawing the threshold from a binomial distribution. In addition, by storing the time-steps of each collected edge in the tree, we can defer the process of checking for false positives in the second phase to the very end of the stream, at which time we have already seen all the edges and chosen the length of the prefix. We remark that the idea of choosing the threshold
from a binomial distribution is somewhat similar to the high level idea of
\emph{Poissonization} technique that helps to break up dependencies between random variables (see e.g.,~\cite{Szp01:poisson,RRSS09:strong}).

In the following we give a formal analysis of our approach.

\subsection{Canonical Breadth First Search (CBFS) Tree}
For each vertex $v$, we define the \emph{Canonical Breadth First Search (CBFS) tree $\CT_k(v)$ of $v$ up to $k$ vertices} to be the tree with root $v$, that is constructed by performing a BFS starting from $v$ such that whenever the neighborhood of a vertex $u$ is to be explored, the neighbors of $u$ will be visited according to its lexicographical order, and the exploration stops if $k$ vertices have been reached or no more new vertices can be reached. The pseudocode for constructing such a tree is given in Algorithm~\ref{alg:bfs}.

\begin{algorithm}[t]
	\caption{CBFS tree of $v$ up to $k$ vertices}~\label{alg:bfs}
	\begin{algorithmic}[1]
		\Procedure{\textsc{CBFSTree}}{$G$,$v$,$k$}	
		\State queue $Q\gets\emptyset$; push $v$ to $Q$
		\State create a tree $T$ with root vertex $v$ 
		\While{$|V(T)|<k$ and $Q$ is not empty}
		\State pop a vertex $u$ from $Q$
		\State let $w_1,\cdots,w_{\deg(u)}$ be neighbors of $u$ that are ordered lexicographically
		\For{$i=1$ to $\deg(u)$}
		\If{$w_i$ is not in $T$}
		\State add vertex $w_i$ and edge $(u,w_i)$ to $T$
		\If{$|V(T)|=k$}
		\State \Return tree $T$
		\EndIf
		\State push $w_i$ to $Q$
		\EndIf
		\EndFor
		\EndWhile
		\State \Return tree $T$
		\EndProcedure
	\end{algorithmic}
\end{algorithm}  

We note that the way we construct a CBFS tree $\CT_k(v)$ (of $v$ up to $k$ vertices) also defines an ordering over all edges in $\CT_k(v)$. We will call such an edge ordering a \emph{CBFS ordering}.

\subsection{Detecting a Vertex Belonging to a Connected Component of Size $k$}
In order to detect if a vertex $v$ belongs to some connected component $C$ of size $k$, we will start from $v$ to collect edges from the stream. One naive way of doing this is to first collect every edge that is adjacent to $v$, and gradually collect edges that are incident to the edges collected so far until we see $k$ vertices. As we discussed before, in this way, the probability of observing a size $k$ component will be dependent on the structure of $C$, even if $C$ is promised to have exactly $k$ vertices. Instead, we collect edges (to form a tree) in a more restrictive way by trying to recover the CBFS tree $\CT_k(v)$. We first introduce the following definition. Let $T$ be a rooted tree and let $x$ be a vertex in $V(T)$. We let $\dep(x)$ denote the depth of $x$ in $T$, i.e., $\dep(x)$ is the length of the path from the root to $x$. Recall that each vertex $u\in V$ has a unique label $\lab(u)\in [n]$.

\begin{definition}~\label{def:violating_tree}
	Given a rooted tree $T$ and an edge $e=(u,v)$ such that $e \notin E(T)$, we call $e$ a \emph{violating edge} for $T$ if either 
	\begin{enumerate}
		\item exactly one of $u,v$, say $u$, belongs to $V(T)$, and there exists a vertex $y\in V(T)$ such that $\dep(y)-\dep(u)\geq 2$ or an edge $(u,x)\in E(T)$ such that $\lab(x)>\lab(v)$; or
		\item both $u,v$ belong to $V(T)$, and their depths $\dep(u),\dep(v)$ (i.e., distances to the root) are different, say $\dep(u)$ is smaller, and $\dep(v)-\dep(u)\geq 2$ or there exists an edge $(u,x)\in E(T)$ such that $\lab(x)>\lab(v)$.
	\end{enumerate}  
	Thus, $e$ is a witness that $T$ is not a CBFS tree.
\end{definition}


As mentioned before, we check if a vertex belongs to a CC of size $k$ by a two-phase procedure. More formally, we divide the edges in the random order stream into two phases such that any specific edge appears in the first phase of the stream with probability $\tau$, for some appropriately chosen parameter $\tau$. 

Now suppose we have already divided the stream into two phases with the desired property. For any input vertex $v$, we gradually collect the edges from the stream to form a tree $F$ rooted at $v$ as in the naive way in the first phase of the stream, while we stop collecting $F$ once we see a violating edge $e$ for the current tree or we have seen more than $k$ vertices and return \Bad. If we succeed in the first phase (i.e., neither of these two bad events happened), then in the second phase, we continue checking if there is some violating edge for the collected $F$ or there is an edge connecting $F$ to the rest of the graph. If either of these two bad events happens, then we return \Bad; otherwise, $v$ will be treated as a candidate that belongs to CC of size $k$.

Now we would like to define how we exactly determine the first part of the stream. For this purpose, consider the following way to generate a random permutation: For each edge $e$ we generate independently a random priority from the interval $[0,1]$. Then we order the edges according to increasing priorities. Now we define the prefix of the edges that belong to the first phase to be all edges whose priority is at most $\tau$. This way, we would have the nice property that the probability of an edge occuring in the first part of the stream is independent of the appearance of other edges. We further observe that the length of the first phase follows the binomial distribution $\Bi(m,\tau)$ and by symmetry the probability of seeing a particular permutation of the edges is independent of the length of the prefix. Thus, we can generate the same distribution by selecting a random permutation and
a random number $\Lambda$ from $\Bi(m,\tau)$ and defining the first part to consist of the first $\Lambda$ edges of the stream. By the above argument, for any edge the probability 
to belong to the first part of the stream is exactly $\tau$ and independent of the appearance of 
other edges. This property will greatly simplify our analysis.

Below we describe this algorithm in pseudocode (Algorithm~\ref{alg:stream}). 
\begin{algorithm}[H]
	\caption{Collecting $k$ neighbors of $v$ from the stream}~\label{alg:stream}
	\begin{algorithmic}[1]
		\Procedure{Stream\_CanoTree}{\Stream$(G)$,$v$,$k$,$\tau$,$\Lambda$}	
		\If{$k=1$}
		\State{\Return \Good}
		\EndIf
		\State create a tree $F$ with root vertex $v$
		\For{$(u,w)\gets$ next edge in the stream}
		\If{$(u,w)$ is a violating edge for $F$}
		\State{\Return \Bad}
		\ElsIf{exactly one of $u,w$ is contained in $V(F)$, say $u\in V(F)$}
		\State add vertex $w$ and edge $(u,w)$ to $F$
		\State record the time-step $t_{(u,w)}$ of $(u,w)$
		\If{$|V(F)|>k$}
		\State \Return \Bad \Comment{Large CC}
		\EndIf
		\EndIf
		\EndFor
		\If{$|V(F)|<k$}
		\State \Return \Bad  \Comment{Small CC}
		\Else \Comment{$|V(F)|=k$}
		\State $t_\ell\gets$ the time-step of the last edge added to $F$ 
		\If{$t_\ell > \Lambda$}
		\State \Return \Bad
		\Else
		\State \Return \Good
		\EndIf
		\EndIf
		\EndProcedure
	\end{algorithmic}
\end{algorithm}

Since we are storing for each tree the time-step of its last edge, we can simply choose $\Lambda$ at
the end of the stream, when we know the number of edges $m$.

Another important observation is that if the edges in the CBFS tree $\CT_k(v)$ of $v$ up to $k$ vertices appear in the CBFS ordering, i.e., the order how we define $\CT_k(v)$, then we will not encounter any violating edge when we are collecting these edges. That is, the CBFS tree will be collected given this edge ordering. Furthermore, if all these edges appear in the first phase of the stream, then Algorithm~\ref{alg:stream} will return \Good, and the resulting tree $F=\CT_k(v)$. In general, if the algorithm returns \Good, we call the corresponding tree $F$ a \emph{lexicographical breadth first search (LBFS) tree}, as the algorithm always collects edges in a lexicographical order (over the labels of corresponding vertices). Therefore, CBFS tree $\CT_k(v)$ is a special case of LBFS tree, and there might be other LBFS trees that are not the CBFS tree depending on the structure of the neighborhood of $v$ and how the edges are ordered in the stream.   


\subsection{Approximating the Number of Connected Components}
In the following, we present our algorithm for approximating the number of connected components.


We first introduce some notations. Let $k$ be any number such that $1\leq k\leq 2/\varepsilon$. Let $\gamma_k$ denote the probability that any set $T$ of $k-1$ edges appears in any specific order $\theta$ over $T$ in the first phase of the stream. Note that $\gamma_k$ is independent of the set $T$. Since there are $(k-1)!$ permutations for any $k-1$ edges, and each such edge appear in the first phase with probability $\tau$, we have that $\gamma_k=\frac{\tau^{k-1}}{(k-1)!}$. 
One important property we have is that the probability that the CBFS tree of $v$ is observed in the first phase of the stream is exactly $\gamma_k$.


Our algorithm is to approximate the number of connected components of size $k$ for each $1\leq k\leq 2/\varepsilon$. For that, we sample a constant number of vertices, and from each sampled vertex, we use \textsc{Stream\_CanoTree} to detect if it is a witness of some connected component of size $k$. We then define the appropriate estimator from these witnesses. 
%
%
%
%
The details are described in Algorithm~\ref{alg:approx_num_cc}.
\begin{algorithm}[H]
	\caption{Approximating the number of CCs}~\label{alg:approx_num_cc}
	\begin{algorithmic}[1]
		\Procedure{NumCC}{\Stream$(G)$,$\varepsilon$,$\rho$}	
		\State $\tau\gets \Theta((4/\varepsilon)^{-6/\varepsilon^2-3}\rho)$ 
		\State choose a number $\Lambda$ from $\Bi(m,\tau)$
		\State sample a set $A$ of $s:=\Theta((4/\varepsilon)^{15/\varepsilon^3}(1/\rho)^{2/\varepsilon+1})$ vertices uniformly at random
		\For{each $k$ from $1$ to $2/\varepsilon$}
		\For{each $v\in A$}
		\If{\textsc{Stream\_CanoTree}(\Stream($G$),$v$,$k$,$\tau$,$\Lambda$) returns \Good}
		
		\State $X_v=1$
		\Else
		\State{$X_v=0$}
		\EndIf
		\EndFor
		\State $\gamma_k\gets\frac{1}{(k-1)!}\cdot \tau^{k-1}$
		\State $C_k=\frac{\sum_{v\in A}X_v}{|A|}\cdot \frac{n}{k}\cdot \frac{1}{\gamma_k}$
		\EndFor
		\State \Return $\sum_{k=1}^{2/\varepsilon}C_k$. 
		\EndProcedure
	\end{algorithmic}
\end{algorithm}

%

We remark that in the above algorithm, sampling the number $\Lambda$ from $\Bi(m,\tau)$ can be implemented in constant space as follows. Initially, we set a counter $c$ to be $0$. Next for each edge from the stream, we flip a biased coin with HEAD probability $\tau$, and increase the counter $c$ by $1$ if a HEAD is seen. It is not hard to see that the final counter $c$ has the same probability distribution as $\Bi(m,\tau)$.

Our theorem is as follows.
\begin{theorem}~\label{thm:num_cc}
	Let $1/2 > \varepsilon,\rho>0$. The algorithm \textsc{NumCC} takes as input any graph $G$ from a random order stream, $\varepsilon,\rho$, approximates the number of connected components of $G$ within additive error $\varepsilon n$,  with probability at least $1-\rho$. The space complexity of this algorithm is $O((1/\varepsilon)^{O(1/\varepsilon^3)}(1/\rho)^{O(1/\varepsilon)})$. Furthermore, each update can be processed in  $O((1/\varepsilon)^{O(1/\varepsilon^3)}(1/\rho)^{O(1/\varepsilon)})$ time and the output can be computed in $O((1/\varepsilon)^{O(1/\varepsilon^3)}(1/\rho)^{O(1/\varepsilon)})$ post-processing time.
\end{theorem}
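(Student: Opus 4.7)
The plan is to analyze the estimator $C_k$ for each $k \in \{1,\dots,\lfloor 2/\varepsilon\rfloor\}$ separately, and then combine. Writing $c = \sum_{k\ge 1} c_k$ where $c_k$ is the number of connected components of size exactly $k$, note that $\sum_{k > 2/\varepsilon} c_k \le \varepsilon n/2$, so it suffices to show $|C_k - c_k| \le \varepsilon^2 n/8$ with probability $\ge 1-\rho\varepsilon/4$ for each $k\leq 2/\varepsilon$ and take a union bound. Throughout, I rely on the Poissonization of the first phase: drawing $\Lambda \sim \Bi(m,\tau)$ and taking the first $\Lambda$ edges of a uniformly random permutation is equivalent to giving each edge an independent Uniform$[0,1]$ priority, sorting by priority, and calling the edges with priority at most $\tau$ the first phase. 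Under this view, any fixed set of $k-1$ edges in any specified order lies in the first phase with probability exactly $\gamma_k = \tau^{k-1}/(k-1)!$, and distinct edges have independent priorities.

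For a vertex $v$ in a CC of size exactly $k$, I would show that $\Pr[X_v=1] = \gamma_k(1 \pm O(k^2\tau))$. The event $X_v=1$ is essentially the event that the $k-1$ edges of $\CT_k(v)$ arrive in CBFS order inside the first phase: any deviation from CBFS scheduling either fails to reach $k$ vertices in $F$ or creates a depth-gap or out-of-order lex neighbor that is flagged as a violating edge by Definition~\ref{def:violating_tree}. The only loss against the $\gamma_k$ baseline comes from the at most $\binom{k}{2}$ non-tree edges of the component, each of which can spoil the event only by arriving in a narrow time window; a direct integration of the kind illustrated by a triangle example gives each such edge a contribution of at most $O(\tau)\cdot\gamma_k$, and summing yields the stated correction.

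The principal obstacle, matching challenge~\ref{item:challenge_a}, is bounding the false-positive probability $\Pr[X_v=1]$ for $v$ lying in a CC of size strictly larger than $k$. For such a $v$, whenever $X_v = 1$ the algorithm builds some $k$-vertex LBFS tree $T = F_\pi(v)$ with $V(T) \subsetneq V(C)$, and for every outgoing edge $e = (x,y)$ with $x \in V(T)$, $y \notin V(T)$, the priority of $e$ must lie strictly below the priority of the tree edge that first places $x$ into $F$ (otherwise processing $e$ either triggers a violation or grows $F$ to $k+1$ vertices). By Poissonization, $\Pr[F_\pi(v)=T] \leq \gamma_k$ for any fixed $T$ and, conditional on $F_\pi(v) = T$, each specific witness being early is a $O(\tau)$ event, since the priorities of $T$-edges and $e$ are independent Uniform$[0,1]$ variables. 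The subtlety is that the number of $k$-vertex LBFS trees rooted at $v$ can be polynomial in $n$, so one cannot naively union-bound at $\gamma_k$ per tree. I would instead combine the disjointness $\sum_T \Pr[F_\pi(v)=T] \le 1$ (each permutation builds at most one tree) with a careful counting argument that groups candidate trees by their rooted shape and the add-order of vertex labels (whose combinatorial count depends only on $k$) to conclude $\Pr[X_v=1] \le \gamma_k\cdot\tau\cdot f(k)$ for some $f(k) = (1/\varepsilon)^{O(1/\varepsilon^2)}$. With the chosen $\tau = \Theta((4/\varepsilon)^{-6/\varepsilon^2-3}\rho)$, the total false-positive contribution $\frac{1}{k\gamma_k}\sum_{v}\Pr[X_v=1]$ summed over $v$ in CCs of size $>k$ is at most $\varepsilon^2 n/16$ for every $k$.

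Finally, conditional on the permutation and $\Lambda$, the values $\{X_v\}_{v\in V}$ are deterministic, and $\sum_{v\in A}X_v$ is a sum over $s$ i.i.d.\ uniform samples of a $\{0,1\}$-valued function; a Hoeffding bound yields concentration of $\frac{1}{s}\sum_{v\in A}X_v$ around $\frac{1}{n}\sum_{v}X_v$ up to an additive $\gamma_k\varepsilon^2/O(1/\varepsilon)$ with probability $\ge 1-\rho\varepsilon/4$, provided $s \ge \Omega(\gamma_k^{-2}\varepsilon^{-4}\log(1/(\varepsilon\rho)))$, which the chosen $s = (4/\varepsilon)^{15/\varepsilon^3}(1/\rho)^{2/\varepsilon+1}$ comfortably satisfies for every $k \le 2/\varepsilon$. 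A second, variance-based concentration over the stream randomness (exploiting that $X_u$ and $X_v$ are independent across distinct CCs) closes the small gap between the conditional mean and its expectation. Multiplying by $n/(k\gamma_k)$ and combining with the bias bound gives $|C_k-c_k|\le\varepsilon^2 n/8$ for every $k$, and a union bound completes the additive-error guarantee. The space bound follows since each of the $s$ partial CBFS trees stores at most $2/\varepsilon$ vertices with their timestamps, and per-edge work reduces to violating-edge checks against the $s$ partial trees, giving the claimed update and post-processing complexities.
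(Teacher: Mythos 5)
Your high-level architecture matches the paper's: estimate each $C_k$ for $k\le 2/\varepsilon$, use Poissonization to treat the first phase as i.i.d.\ priorities, split vertices by component size, and union-bound. However, there are two concrete gaps.

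\textbf{The false-positive bound is not actually established.} For a vertex $v$ in a CC of size $>k$, the target is $\Pr[X_v=1]\le\eta\gamma_k$. You correctly note one cannot union-bound over all LBFS trees $T$ at cost $\gamma_k$ each, since their number can grow with $n$. But the remedy you sketch --- ``disjointness $\sum_T\Pr[F_\pi(v)=T]\le 1$ combined with a counting argument grouped by shape and add-order, whose combinatorial count depends only on $k$'' --- does not produce the claimed bound. Disjointness alone gives $\Pr[X_v=1]\le\sum_T\Pr[F_\pi(v)=T]\cdot\tau^{b(T)}\le\tau$, which is far weaker than $\eta\gamma_k\approx\tau^{k-1}$ for $k\ge 2$: the $\gamma_k$ factor cannot be recovered from disjointness. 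And the assertion that the count ``depends only on $k$'' is false and misses the essential point: the paper's key combinatorial ingredient (Lemma 3.7) is a $b$-\emph{dependent} bound $t_{k,b}(v)\le k^{2(k-1)}(k-1+b)^{k(k-1)}$, obtained by observing that an LBFS tree with boundary size $b$ can only involve vertices of degree $\le k-1+b$, hence lives in a subgraph of size $(k-1+b)^k$. The exponential decay $\tau^b$ then beats the polynomial growth in $b$ when $\tau$ is chosen small enough. Without a bound of this form, the series $\sum_{b\ge 1}t_{k,b}(v)\gamma_k\tau^b$ cannot be controlled, and your argument has no replacement.

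\textbf{The Hoeffding step does not fit the algorithm's parameters.} You concentrate $\frac{1}{s}\sum_{v\in A}X_v$ around $\frac{1}{n}\sum_{v\in V}X_v$ with target error $\Theta(\gamma_k\varepsilon^2)$, which via Hoeffding forces $s=\Omega(\gamma_k^{-2}\varepsilon^{-4}\log(1/(\varepsilon\rho)))$. For $k$ near $2/\varepsilon$, $\gamma_k^{-2}\approx(4/\varepsilon)^{24/\varepsilon^3}(1/\rho)^{4/\varepsilon}$, which strictly exceeds the algorithm's $s=\Theta((4/\varepsilon)^{15/\varepsilon^3}(1/\rho)^{2/\varepsilon+1})$; so ``comfortably satisfies'' is false. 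The paper instead fixes $A$ via a Chernoff-type event on $|A_k|/|A|$ (Lemma 3.5) and then applies Chebyshev over the \emph{stream} randomness to the sum $Y_1=\sum_{v\in A_k}X_v$, exploiting pairwise independence of $\{X_v\}_{v\in A_k}$ after conditioning on no two sampled vertices sharing a size-$k$ component; Chebyshev only needs $s=\Omega(\gamma_k^{-1}\varepsilon^{-5}\rho^{-1})$, which the chosen $s$ does satisfy. Your two-stage decomposition could be made to work with a larger $s$ (and would still meet the coarse $O$-bound in the theorem statement), but as written it does not prove the theorem about the algorithm \textsc{NumCC} with its stated parameters, and you also need to address the within-component covariances of $\{X_v\}_{v\in V}$ in the second, variance-over-stream step rather than merely invoking across-component independence.
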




\subsection{Proof of Theorem~\ref{thm:num_cc}}
Now we present the analysis of the algorithm and prove Theorem~\ref{thm:num_cc}. First note that, besides the usage of $O(1)$ space to simulate the sampling $\Lambda$ from a Binomial distribution, the algorithm \textsc{NumCC} only needs to store for each sampled vertex in $A$, a corresponding tree with size up to $O(1/\varepsilon)$ vertices. This implies that the space used by \textsc{NumCC} is at most $O(|A|/\varepsilon)=O((4/\varepsilon)^{15/\varepsilon^3}(1/\rho)^{2/\varepsilon+1})$. Note that for each update, we only need to check for each $k\leq 2/\varepsilon$, if the edge is violating for each of collected trees or witnesses a large component (of size larger than $k$). Since each such tree has size at most $k\leq 2/\varepsilon$ and the total number of collected trees at any time is $O(|A|)$, we know that each update can be processed in time $O(|A|/\varepsilon^2)=O((4/\varepsilon)^{15/\varepsilon^3+2}(1/\rho)^{2/\varepsilon+1})$. To compute the final estimator, it suffices to have the values $X_v$ for each vertex $v\in A$ in each iteration. Thus, the estimator can be computed in $O(|A|/\varepsilon)=O((4/\varepsilon)^{15/\varepsilon^3+1}(1/\rho)^{2/\varepsilon+1})$ time in the post-processing procedure.  
In the following, we prove the correctness of the algorithm \textsc{NumCC}.

For any vertex $v$, we let $\mathbf{C}(v)$ denote the connected component containing $v$. When it is clear from the context, we also let $\mathbf{C}(v)$ denote the set of vertices in the connected component containing $v$. We let $\cc_k$ denote the number of CCs with size $k$. We let $\cc_{\leq k}$ denote the number of CCs with size at most $k$. For any two vertices $u,v$, we let $\dt_G(u,v)$ denote the distance between $u,v$ in the graph $G$ that is defined by the edge stream.

We write $A\sim \calU_V$ to indicate that $A$ is a set of $s$ vertices sampled uniformly at random from $V$. 
Let $\calP[E]$ denote the set of all permutations of edges. We write $\sigma \sim\calU_{\calP[E]}$ to indicate that an edge ordering $\sigma$ is sampled from the uniform distribution over $\calP[E]$. 

For any integer $k\geq 1$, we let $A_k\subseteq A$ denote the subset of vertices in $A$ that belong to a CC of size $k$, that is, $A_k:=\{v|v\in A, |\mathbf{C}(v)|=k\}$. By Chernoff bound and our setting that $s=\Theta((4/\varepsilon)^{15/\varepsilon^3}(1/\rho)^{2/\varepsilon+1})$, we have the following lemma regarding the sample set $A$. 

\begin{lemma}\label{lemma:property_A}
	With probability (over the randomness of $A\sim \calU_V$) at least $1-\frac{\rho}{5}$, it holds that for each $k\leq 2/\varepsilon$, $|\frac{|A_k|}{|A|}-\frac{k\cdot \cc_k}{n}|\leq \frac{\varepsilon^2}{16}$ and no two vertices in $A$ belong to the same connected component of size $k$.
\end{lemma}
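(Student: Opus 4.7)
The plan is a standard two-pronged concentration argument combined with a birthday-paradox bound. Since the claimed space budget is independent of $n$, I may without loss of generality assume $n \geq C s^2/(\varepsilon^2 \rho)$ for a sufficiently large absolute constant $C$; otherwise the entire edge stream already fits into the $O((1/\varepsilon)^{O(1/\varepsilon^3)}(1/\rho)^{O(1/\varepsilon)})$ space budget and the problem can be solved exactly. Throughout the argument I treat $A$ as a multiset sampled with replacement. Sampling without replacement can only improve concentration (via negative association or Serfling's bound), and since $s\ll n$ in the regime of interest the distinction changes the bounds only by lower-order factors.

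For the first claim, fix $k\in\{1,\ldots,2/\varepsilon\}$ and for each $v\in A$ set $Y_v=\mathbf{1}[|\mathbf{C}(v)|=k]$, so that $|A_k|=\sum_{v\in A}Y_v$. Since a uniformly random vertex of $V$ lies in a component of size $k$ with probability exactly $k\cdot \cc_k/n$, we have $\mathbb{E}[|A_k|/|A|]=k\cc_k/n$. The standard Hoeffding bound for bounded i.i.d.\ variables then gives
$$
\Pr\left[\left|\frac{|A_k|}{|A|}-\frac{k\cdot\cc_k}{n}\right|>\frac{\varepsilon^2}{16}\right]\leq 2\exp\!\left(-\frac{s\varepsilon^4}{128}\right).
$$
The choice $s=\Theta((4/\varepsilon)^{15/\varepsilon^3}(1/\rho)^{2/\varepsilon+1})$ is comfortably large enough to push this quantity below $\rho\varepsilon/20$, and a union bound over the at most $2/\varepsilon$ values of $k$ then contributes at most $\rho/10$ to the overall failure probability.

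For the second claim, I bound the collision probability directly. Fix a connected component $C$ of size $k$; a uniformly random ordered pair of distinct vertices of $V$ both lies in $C$ with probability $k(k-1)/(n(n-1))$. Summing over the at most $n/k$ components of size $k$ shows that any given pair collides inside some CC of size $k$ with probability at most $(k-1)/(n-1)$. Taking a union bound over the $\binom{s}{2}$ pairs from $A$ and over the $\leq 2/\varepsilon$ values of $k$ gives total collision probability at most
$$
\binom{s}{2}\cdot\sum_{k=1}^{2/\varepsilon}\frac{k-1}{n-1}\leq \frac{s^2}{\varepsilon^2(n-1)},
$$
which is at most $\rho/10$ by the lower bound on $n$ imposed at the start.

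Summing the two contributions yields total failure probability at most $\rho/5$, as required. I do not foresee any real obstacle: the argument is essentially routine Chernoff plus birthday paradox. The only mildly delicate points are the disposal of small $n$ (trivially, by storing the whole graph) and the appeal to a concentration inequality under sampling without replacement, which is a textbook fact worth flagging but not a genuine difficulty.
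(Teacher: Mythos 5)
Your proposal follows the same approach the paper intends — a Chernoff/Hoeffding bound for the frequency estimate and a birthday-paradox bound for the no-collision event — but you are substantially more careful than the paper, which simply asserts the lemma ``by Chernoff bound'' without further comment. Your observation that the lemma as literally stated is \emph{false} for small $n$ (take $G$ to be a disjoint union of $n/2$ edges and $n \ll s^2/\rho$; then two samples land in the same size-$2$ component with probability much larger than $\rho/5$) is a genuine point the paper glosses over, and your patch — dispose of $n = O(s^2/(\varepsilon^2\rho))$ by storing the whole stream, which still fits the claimed $(1/\varepsilon)^{O(1/\varepsilon^3)}(1/\rho)^{O(1/\varepsilon)}$ budget — is exactly the right fix and is implicitly needed for the paper's argument to go through. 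The individual calculations (Hoeffding with $t=\varepsilon^2/16$ giving $2\exp(-s\varepsilon^4/128)$ per $k$, the per-pair collision bound $(k-1)/(n-1)$, the final split $\rho/10 + \rho/10$) are all correct. The only loose end worth noting is the flagged discrepancy between sampling with and without replacement: the paper says ``sample a set,'' and you correctly note that the Hoeffding/Serfling comparison and the with-replacement collision bound $k/n$ vs.\ $(k-1)/(n-1)$ differ only in lower-order terms — this is fine but would need a sentence to pin down in a polished write-up.
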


In the following, we will fix $A$ and condition on the above  events listed in Lemma~\ref{lemma:property_A}, which hold with probability at least $1-\frac{\rho}{5}$. 

Now let $k$ be an integer such that $1\leq k\leq \frac{2}{\varepsilon}$, and we analyze the estimator $C_k=\frac{\sum_{v\in A}X_v}{|A|}\cdot \frac{n}{k}\cdot \frac{1}{\gamma_k}$, where $X_v$ is the indicator variable of the event that \textsc{Stream\_CanoTree}($\Stream(G),v,k,\tau$) returns \Good. 

Let $v\in A$. We distinguish the following three cases according to the size of the connected component containing $v$.

\textbf{(\rom{1}) $|\mathbf{C}(v)|<k$.} In this case, \textsc{Stream\_CanoTree} will always return \Bad~as the maximum number of vertices in the collected tree is at most $k-1$. Thus, with probability $1$, it holds that $X_v=0$.

\textbf{(\rom{2}) $|\mathbf{C}(v)|=k$.} Note that $X_v=1$ will happen if either 1) the CBFS tree $\CT_k(v)$ of $v$ up to $k$ vertices is observed in the lexicographical order in the first part of the stream, or 2) some other lexicographical tree $\LT_k(v)$ other than $\CT_k(v)$ is observed in the first part of the stream and the edges in $\CT_k(v)-\LT_k(v)$ appeared in the ``non-collectable'' order in the first part of the stream, that is, no edge in $\CT_k(v)-\LT_k(v)$	was detected as a violating edge by \textsc{Stream\_CanoTree} and \textsc{Stream\_CanoTree} returns \textbf{Good} and collects $\LT_k(v)$ at the end of the stream. The first event happens will probability $\gamma_k$. 

Now let us bound the probability of the second event. Note that for any  lexicographical tree $\LT_k(v)$, the probability that $\LT_k(v)$ is observed in the first phase of the stream is $\gamma_k$. Conditioned on that, the probability that all the edges in $\CT_k(v)-\LT_k(v)$ appeared in the ``non-collectable'' order in the first part of the stream is at most $\tau^{|E(\CT_k(v)-\LT_k(v))|}$, which is at most $\tau$, as $|E(\CT_k(v)-\LT_k(v))|\geq 1$. Since there are at most $\binom{k(k-1)/2}{k-1}\leq k^{2(k-1)}$ such lexicographical trees, the probability that the second event happens is at most $k^{2(k-1)}\cdot \tau\cdot\gamma_k$.


Therefore, it holds that 
\begin{eqnarray*}
\gamma_k&\leq& \Pr_{\sigma\sim\calU_{\calP[E]}}[X_v=1] 
\leq
\gamma_k + k^{2(k-1)}\cdot\tau\cdot \gamma_k 
\leq
(1+\varepsilon)\gamma_k.
\end{eqnarray*}

\textbf{(\rom{3}) $|\mathbf{C}(v)|>k$.} For any vertex $v$ that belongs to a CC of size $s\geq k$, there exist a unique CBFS tree $\CT_k(v)$; there might be multiple LBFS trees $\LT_k(v)$. Note that $X_v=1$ is equivalent to that either 1) $\CT_k(v)$ is observed in the first part of the stream, and all the edges between $\CT_k(v)$ and the remaining part of the graph have appeared in the ``non-collectable'' order in the first phase of the stream; 2) some lexicographical tree $\LT_k(v)$ other than $\CT_k(v)$ is observed in the first phase of the stream and the edges in $\CT_k(v)-\LT_k(v)$ and $E(\LT_k(v),V - \LT_k(v))$ appeared in the ``non-collectable'' order in the first phase of the stream. In the following, we bound the probability of these events. Our main lemma is as follows.

\begin{lemma}~\label{lemma:large_CC_X_v}
	Let $\eta=\Theta(\rho\varepsilon^3)$. Let $v\in A$ be a vertex that belongs to a CC of size $s> k$. Then it holds that $\Pr_{\sigma\sim\calU_{\calP[E]}}[X_v=1]\leq \eta\cdot \gamma_k$.
\end{lemma}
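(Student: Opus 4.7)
The plan is to decompose $\{X_v=1\}$ according to which LBFS tree $F$ the algorithm outputs, bound the probability of each outcome via the priority view of the random ordering, and union-bound. I would work in the priority formulation from Section~3: each edge $e$ receives an i.i.d.\ priority $\pi_e\in[0,1]$, the stream order is priority-increasing, and the first phase is exactly $\{e:\pi_e\leq\tau\}$, so that events on disjoint edge sets become independent and the length $\Lambda$ of the prefix decouples from the permutation of edges.

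For any LBFS tree $T$ of size $k$ rooted at $v$ with $E(T)\subseteq E(G)$, the event $\{F=T \text{ and } X_v=1\}$ requires (i) the $k-1$ tree edges $e_1,\dots,e_{k-1}$ of $T$ to have priorities in $[0,\tau]$ arriving in the unique LBFS order of $T$, which has probability exactly $\gamma_k=\tau^{k-1}/(k-1)!$; and (ii) every outgoing edge $e^\star=(u^\star,w^\star)$ with $u^\star\in V(T)$ and $w^\star\notin V(T)$ to satisfy $\pi_{e^\star}<\pi_{e_{u^\star}}$, where $e_{u^\star}$ is the tree edge that attaches $u^\star$ to $F$. Indeed, if instead $\pi_{e^\star}>\pi_{e_{u^\star}}$, then at the moment $e^\star$ is processed we have $u^\star\in V(F)$ and $w^\star\notin V(F)$, so \textsc{Stream\_CanoTree} either flags $e^\star$ as a violating edge or adds $w^\star$ and finds $|V(F)|>k$; both cases return \Bad. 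By independence of $\pi_{e^\star}$ from all tree-edge priorities together with $\pi_{e_{u^\star}}\leq\tau$, the probability of (ii) for a single $e^\star$ is at most $\tau$. Since $|\mathbf{C}(v)|=s>k$ and $V(T)\subsetneq\mathbf{C}(v)$ is connected, at least one such outgoing edge exists, and hence $\Pr[F=T,X_v=1]\leq \gamma_k\cdot\tau$.

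Next I would union-bound over LBFS trees, following the two-case decomposition outlined after Case~(\rom{3}) ($T=\CT_k(v)$, or $T$ is some other LBFS tree). Arguing as in Case~(\rom{2}), for any fixed vertex set of size $k$ the number of LBFS trees is at most $\binom{k(k-1)/2}{k-1}\leq k^{2(k-1)}$. Combined with the claim that, under $X_v=1$, the vertex set $V(F)$ must equal $V(\CT_k(v))$---any ``wrong'' vertex would eventually either create a depth gap of $\geq 2$ or an out-of-lex-order T-child at some ancestor, and the next incident edge would then be caught as violating by Definition~\ref{def:violating_tree}---this yields
\[
\Pr[X_v=1]\;\leq\;k^{2(k-1)}\cdot\gamma_k\cdot\tau.
\]
Plugging in $k\leq 2/\varepsilon$ and $\tau=\Theta((4/\varepsilon)^{-6/\varepsilon^{2}-3}\rho)$ gives $k^{2(k-1)}\cdot\tau\leq\Theta(\rho\varepsilon^3)=\eta$, establishing $\Pr[X_v=1]\leq\eta\cdot\gamma_k$ as required.

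The main obstacle is the structural claim $V(F)=V(\CT_k(v))$: the naive LBFS procedure could in principle produce an LBFS tree on a different vertex set (e.g., when the CBFS parent $y$ of a depth-$\geq 2$ vertex is not $v$ itself, the CBFS edge $(y,u)$ can be ``hidden'' by arriving before $y$ is added to $V(F)$), so the clean count $k^{2(k-1)}$ does not apply verbatim as in Case~(\rom{2}). I would handle this either by (a) a careful induction on the first step at which the algorithm diverges from the CBFS sequence, invoking both the depth clause and the lex clause of Definition~\ref{def:violating_tree} to show any such divergence forces a violating edge later in the stream or an overflow $|V(F)|>k$; or (b) sharpening the per-tree bound to $\Pr[F=T,X_v=1]\leq\gamma_k\tau^{\ell_T}$ where $\ell_T$ counts both outgoing edges of $T$ and edges of $\CT_k(v)\setminus T$ that must be hidden, so that trees on non-canonical vertex sets automatically contribute extra factors of $\tau$ and the total sum still collapses to the same $\Theta(\rho\varepsilon^3)\cdot\gamma_k$ bound. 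The per-tree priority bookkeeping is routine; the combinatorial enumeration of admissible LBFS trees is where the real care lies.
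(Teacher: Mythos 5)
Your per-tree computation in the priority view is exactly right: conditioning on $F=T$ forces the $k-1$ tree edges to arrive in the unique LBFS order within the prefix (probability $\gamma_k$), and any outgoing edge $(u^\star,w^\star)$ with $u^\star\in V(T)$ must have arrived before $u^\star$ joined $F$, which happens with probability at most $\tau$ by independence of priorities. And you correctly recognize that the ``clean'' reduction $V(F)=V(\CT_k(v))$ is not available here, so the count $k^{2(k-1)}$ from Case~(\rom{2}) cannot be invoked verbatim. However, the proposal as written does not close the argument, for two reasons.

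First, your route~(a) is not merely unproven but false. A divergence from the CBFS sequence is \emph{not} deterministically caught: if the CBFS edge $(y,u)$ arrives while $y\notin V(F)$, it is silently discarded and never reappears in the stream, so the algorithm can happily build a size-$k$ LBFS tree on a vertex set different from $V(\CT_k(v))$ and return \Good. Being caught is a high-probability event (precisely because such a hidden edge must beat $y$'s own tree edge in priority, costing a factor $\tau$), not a certainty, so no induction on ``the first divergence'' can make the structural claim deterministic.

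Second, route~(b) is the right shape --- it is essentially the paper's decomposition of $X_v=1$ into events $\calE_T$ indexed by LBFS trees $T$, with the extra $\tau$-factors coming from $E(\CT_k(v)\setminus T)$ and the outgoing edges $E(T, V\setminus T)$ --- but the assertion that ``the total sum still collapses'' is precisely the nontrivial step, and it is omitted. In an unbounded-degree graph the number of LBFS trees of size $k$ rooted at $v$ can grow with $n$, so a naive union bound diverges; this is challenge~\ref{item:challenge_a} from Section~\ref{sec:techniques}. The paper resolves it by stratifying $\calT_k(v)$ by boundary size $b=|E(V(T),\mathbf{C}(v)\setminus V(T))|$ and proving the counting bound $t_{k,b}(v)\le k^{2(k-1)}(k-1+b)^{k(k-1)}$ (Lemma~\ref{lemma:tkb_upper}): a tree with boundary $b$ can only contain vertices of degree at most $k-1+b$ and of distance at most $k-1$ from $v$, bounding the candidate vertex pool by $(k-1+b)^{k}$. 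Since each $T\in\calT_{k,b}(v)$ pays $\gamma_k\tau^{b}$, the sum $\sum_{b\ge1} t_{k,b}(v)\,\gamma_k\tau^{b}$ is geometric-with-polynomial-prefactor and collapses to $\eta\gamma_k$ after splitting at $b=k$. Without this counting lemma --- or an equivalent argument controlling how many trees can incur exactly $\ell$ extra $\tau$-factors --- the union bound over $T$ is not finite, and the bound $\Pr[X_v=1]\le\eta\gamma_k$ does not follow. This is the one genuinely missing idea in the proposal.
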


To prove the above lemma, we introduce some definitions.
\begin{definition}
	Let $\calT_k(v)$ denote the set of all LBFS trees rooted at $v$. Let $\calT_{k,b}(v)$ denote the set of LBFS trees $T\in \calT_k(v)$ with boundary size $b$, i.e., $|E(V(T), \mathbf{C}(v)\setminus V(T))|=b$. Let $t_{k,b}(v)=|\calT_{k,b}(v)|$.
\end{definition}
Note that for any $v$ with $|\mathbf{C}(v)|= k$, it is apparent that all the LBFS trees rooted at $v$ of size $k$ have no connections to the rest of the graph, that is, $t_{k,b}(v)=0$ for any $b\geq 1$. For any $v$ with $|\mathbf{C}(v)|> k$, and any LBFS tree $T$ rooted at $v$ of size $k$, there must be at least one edge lie between $V(T)$ and $\mathbf{C}(v)\setminus V(T)$, that is, $t_{k,0}(v)=0$.

\begin{lemma}~\label{lemma:tkb_upper}
	Let $v$ be a vertex that belongs to a CC of size at least $k$. Then it holds that for any $b\geq 1$,  $t_{k,b}(v)\leq k^{2(k-1)}\cdot(k-1+b)^{k(k-1)}$. 
\end{lemma}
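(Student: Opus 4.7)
The plan is to bound $t_{k,b}(v)$ by encoding each $T \in \calT_{k,b}(v)$ as a (tree structure, labeling) pair and bounding each factor separately. The whole argument rests on one structural observation: for every $T \in \calT_{k,b}(v)$ and every vertex $u \in V(T)$, we have $\deg(u) \leq k - 1 + b$. Indeed, since $u \in \mathbf{C}(v)$, every $G$-neighbor of $u$ lies in $\mathbf{C}(v)$; at most $k - 1$ of these neighbors sit in $V(T) \setminus \{u\}$, and every other one contributes a distinct boundary edge incident to $u$, so there can be at most $b$ of them.

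For the encoding, I linearize $T$ by listing its non-root vertices $w_1, \dots, w_{k-1}$ in breadth-first order with ties broken by label (well-defined since in any LBFS tree the children of each internal vertex already appear in increasing label order). Let position $0$ denote the root $v$ and position $i$ denote $w_i$. The \emph{structure} of $T$ is the parent function $i \mapsto p_i \in \{0, \dots, i-1\}$, and the \emph{labeling} is the tuple $(\lab(w_1), \dots, \lab(w_{k-1}))$. Since $T$ can be recovered uniquely from this pair, the encoding is injective. By Cayley's formula, the number of rooted labeled trees on $k$ positions with a fixed root is at most $k^{k-2}$, which upper-bounds the number of structures. For a fixed structure I would count the admissible labelings inductively: once positions $0, \dots, i-1$ have been labeled, $\lab(w_i)$ must be a $G$-neighbor of the vertex already placed at position $p_i$; this vertex lies in $V(T)$ and so has at most $k - 1 + b$ neighbors by the observation above. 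This yields at most $(k-1+b)^{k-1}$ admissible labelings per structure, and therefore $t_{k,b}(v) \leq k^{k-2} \cdot (k-1+b)^{k-1}$, which is well within the claimed bound $k^{2(k-1)} \cdot (k-1+b)^{k(k-1)}$.

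The one subtle point, which I expect to be the main obstacle in writing the argument cleanly, is that the degree bound used in the labeling step is applied \emph{a posteriori}: the inequality $\deg(\lab(w_{p_i})) \leq k - 1 + b$ is only guaranteed because the labeling is ultimately required to produce a tree in $\calT_{k,b}(v)$, not for an arbitrary partial labeling. Since our goal is only an upper bound on $|\calT_{k,b}(v)|$, it is harmless to restrict the inductive count at each step to those extensions that are consistent with some eventual $T \in \calT_{k,b}(v)$ — every actual element of $\calT_{k,b}(v)$ yields an encoding satisfying this restriction, so no tree is missed.
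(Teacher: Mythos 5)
Your proof is correct, and it takes a genuinely different route from the paper's. The paper first bounds the pool of candidate vertices: it observes (as you do) that every vertex of a tree in $\calT_{k,b}(v)$ has $G$-degree at most $k-1+b$, then notes that such vertices must also lie within distance $k-1$ of $v$, so there are at most $\lambda := (k-1+b)^k$ of them; it then bounds the number of LBFS rooted trees of size $k$ inside a $\lambda$-vertex graph by $\binom{\lambda-1}{k-1}\binom{k^2}{k-1}\le (k^2\lambda)^{k-1}$, giving $k^{2(k-1)}(k-1+b)^{k(k-1)}$. You instead bypass the candidate-set step entirely and encode each tree directly as a (parent-function, labeling) pair, pushing the degree bound into the stepwise labeling count. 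Your argument yields the tighter bound $k^{k-2}(k-1+b)^{k-1}$, well within the stated one. The a posteriori use of the degree bound that you flag is handled correctly: since you only need an upper bound, you may prune at each step any placement whose vertex has degree $>k-1+b$, because no such placement can extend to an element of $\calT_{k,b}(v)$; under this pruning there are at most $k-1+b$ choices at each of the $k-1$ steps. The one cosmetic wrinkle is the appeal to Cayley's formula: your ``structures'' are parent functions with $p_i\in\{0,\dots,i-1\}$, of which there are exactly $(k-1)!$, not $k^{k-2}$; Cayley's count is not what you are enumerating. Since $(k-1)!\le k^{k-2}$ the numerical bound stands, but it would be cleaner to cite $(k-1)!$ directly rather than Cayley.
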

\begin{proof}
	Note that for any LBFS tree $T\in \calT_{k,b}(v)$, only vertices with degree at most $k-1+b$ can be contained in $T$, as otherwise the number of edges between $T$ and $C-T$ is at least $b+1$, which contradicts the definition of $T$. In addition, only vertices that are within distance at most $k-1$ from $v$ can be contained in $T$. Since the total number of vertices that are within distance at most $k-1$ from $v$ and with degree at most $k-1+b$ is at most $(k-1+b)^{k}$, this also implies the same upper bound on the number of vertices that can be contained in any of tree in $\calT_{k,b}(v)$.
	
	Finally, as the total number of different LBFS rooted tree of size $k$ in a graph with at most $\lambda$ vertices is at most $\binom{\lambda-1}{k-1}\cdot \binom{k^2}{k-1}\leq (k^2\lambda)^{k-1}$, we have that $t_{k,b}(v)\leq k^{2(k-1)}\cdot(k-1+b)^{k(k-1)}$.
\end{proof}

Now we are ready to prove Lemma~\ref{lemma:large_CC_X_v}.
\begin{proof}[Proof of Lemma~\ref{lemma:large_CC_X_v}]
	For any $T\in \calT_k(v)$, let $\calE_T$ denote the event that $T$ is observed from $v$, all edges in $E(\CT_k(v)-T)$ and $E(T,V - T)$ appeared in the ``non-collectable'' order in the first phase of the stream. Let $\calE_T'$ denote the event that all edges in $E(\CT_k(v)-T)$ and $E(T,V - T)$ appeared in the first phase of the stream. For any $b\geq 1$ and $T\in \calT_{k,b}(v)$, we note that $|E(\CT_k(v)-T)\cup E(T,V - T)|\geq b$, and thus
	\begin{eqnarray*}
		\Pr_{\sigma\sim\calU_{\calP[E]}}[\calE_T]
		\leq
		\Pr_{\sigma\sim\calU_{\calP[E]}}[\textrm{$T$ is observed}]\cdot \Pr_{\sigma\sim\calU_{\calP[E]}}[\calE_T']
		\leq 
		\gamma_k\cdot \tau^{b}
	\end{eqnarray*}
	
	Then we have that 
	\begin{eqnarray*}
		\Pr_{\sigma\sim\calU_{\calP[E]}}[X_v=1] 
		&=& \Pr_{\sigma\sim\calU_{\calP[E]}}[\bigcup_{T\in \calT_k(v)} \calE_T] \\
		&\leq& \sum_{T\in \calT_k(v)} \Pr_{\sigma\sim\calU_{\calP[E]}}[\calE_T]\\
		&=&  \sum_{b\geq 1} \sum_{T \in \calT_{k,b}(v)} \Pr_{\sigma\sim\calU_{\calP[E]}}[\calE_T]\\
		&\leq& \sum_{b\geq 1} t_{k,b}(v)\cdot \gamma_k \cdot \tau^{b}\\
		&\leq& \sum_{b \geq 1} k^{2(k-1)}\cdot(k-1+b)^{k(k-1)} \cdot \gamma_k \cdot \tau^{b}\\
		&\leq& k\cdot k^{2(k-1)}\cdot (2k)^{k(k-1)}\cdot \gamma_k\cdot \tau\\
		&&+ \sum_{b \geq k} k^{2(k-1)}\cdot(2b)^{k(k-1)} \cdot \gamma_k \cdot \tau^{b}\\
		&\leq &(2k)^{k^2+k}\tau\cdot\gamma_k+(2k)^{k^2}\cdot \sum_{b\geq k} e^{k(k-1)\ln b-\ln\frac{1}{\tau} \cdot b}\gamma_k\\
		&\leq& (2k)^{k^2+k}\tau\cdot\gamma_k+(2k)^{k^2}\cdot \sum_{b\geq k} \tau^{2b}\gamma_k\\
		&\leq& \eta\cdot\gamma_k,
	\end{eqnarray*}
	where the second inequality follows from Lemma \ref{lemma:tkb_upper} and where in the last two inequalities we used the facts that $k\leq \frac{2}{\varepsilon}$, $\tau=\Theta((4/\varepsilon)^{-6/\varepsilon^2-3}\rho)$, $\gamma_k=\frac{1}{(k-1)!}\cdot\tau^{k-1}$ and $\eta=\Theta(\rho\varepsilon^3)$.
\end{proof}

Therefore, we have that $\gamma_k\leq \E_{\sigma\sim\calU_{\calP[E]}}[X_v]\leq (1+\varepsilon)\gamma_k$ if $|\mathbf{C}(v)|=k$; and that $\E_{\sigma\sim\calU_{\calP[E]}}[X_v]\leq \eta\cdot \gamma_k$ if $|\mathbf{C}(v)|>k$. 

Now let $Y_1:=\sum_{v\in A_k} X_v$. It holds that $\E_{\sigma\sim\calU_{\calP[E]}}[Y_1]=\sum_{v\in A_k} \E_{\sigma\sim\calU_{\calP[E]}}[X_v]$ and thus that $|A_k|\gamma_k\leq \E_{\sigma\sim\calU_{\calP[E]}}[Y_1]\leq (1+\varepsilon)|A_k|\gamma_k$. Furthermore, since we have conditioned on the event that no two vertices in $A$ belong to the same connected component of size $k$, the random variables in $\{X_v:v\in A_k\}$ are independent of each other. Thus, $\Var_{\sigma\sim\calU_{\calP[E]}}[Y_1]=\sum_{v\in A_k}\Var_{\sigma\sim\calU_{\calP[E]}}[X_v]\leq |A_k|\cdot \E_{\sigma\sim\calU_{\calP[E]}}[X_v]\leq |A_k|\cdot(1+\varepsilon)\gamma_k$. Thus, by Chebyshev inequality, 
\begin{eqnarray*}
\Pr_{\sigma\sim\calU_{\calP[E]}}[|Y_1-\E[Y_1]|
\geq
\frac{\varepsilon^2 \gamma_k|A|}{16}]
\leq \frac{256\cdot(1+\varepsilon)\gamma_k|A_k|}{\varepsilon^4 |A|^2\gamma_k^2} 
\leq \frac{\varepsilon\rho}{20}, 
\end{eqnarray*}
where in the last inequality, we used the fact that $|A_k|\leq |A|$ and that $|A|=\Theta((4/\varepsilon)^{15/\varepsilon^3}(1/\rho)^{2/\varepsilon+1})=\Omega(\frac{1}{\varepsilon^5\gamma_k\rho})$.
This further implies that with probability at least $1-\frac{\varepsilon\rho}{20}$, 
$$|A_k|\gamma_k - \frac{\varepsilon^2 \gamma_k|A|}{16}\leq Y_1\leq(1+\varepsilon)|A_k|\gamma_k +\frac{\varepsilon^2 \gamma_k|A|}{16}.$$

Now let $Y_2:=\sum_{j\geq k+1}\sum_{v\in A_j} X_v$. Then it holds that $$\E_{\sigma\sim\calU_{\calP[E]}}[Y_2]=\E_{\sigma\sim\calU_{\calP[E]}}[\sum_{j\geq k+1}\sum_{v\in A_j}X_v]\leq \sum_{j\geq k+1}|A_j|\cdot \eta\cdot \gamma_k\leq |A|\eta\gamma_k.$$

By Markov's inequality, we have that 
\begin{eqnarray*}
	\Pr_{\sigma\sim\calU_{\calP[E]}}[Y_2\geq \frac{\varepsilon^2 |A|\gamma_k}{16}] \leq \frac{16\cdot \E[Y_2]}{\varepsilon^2 |A|\gamma_k} \leq \frac{\varepsilon\rho}{20},
\end{eqnarray*}
where in the last inequality we used the fact that $\eta=O(\rho\varepsilon^3)$.

Therefore, with probability at least $1-\frac{\varepsilon\rho}{10}$, 
\begin{eqnarray*}
	|A_k|\gamma_k-\frac{\varepsilon^2 |A|\gamma_k}{16}\leq Y_1+Y_2\leq (1+\varepsilon)|A_k|\gamma_k+\frac{\varepsilon^2 |A|\gamma_k}{8}
\end{eqnarray*}

Now since $C_k=\frac{\sum_{v\in A}X_v}{|A|}\frac{n}{k}\frac{1}{\gamma_k}=\frac{Y_1+Y_2}{|A|}\frac{n}{k}\frac{1}{\gamma_k}$, we have that with probability at least $1-\frac{\varepsilon\rho}{10}$, 
\begin{eqnarray*}
C_k &\geq& \frac{|A_k|\gamma_k-\frac{\varepsilon^2 |A|\gamma_k}{16}}{|A|}\frac{n}{k\gamma_k}
\geq \left(\frac{k\cdot\cc_k\cdot \gamma_k}{n}-\frac{\varepsilon^2}{16}\gamma_k-\frac{\varepsilon^2 \gamma_k}{16}\right)\frac{n}{k\gamma_k}
\geq \cc_k-\frac{1}{8}\varepsilon^2n,
\end{eqnarray*}
 and 
\begin{eqnarray*}
	C_k\leq \frac{(1+\varepsilon)|A_k|\gamma_k+\frac{\varepsilon^2 |A|\gamma_k}{8}}{|A|}\frac{n}{k}\frac{1}{\gamma_k}
	&\leq& \left(\frac{(1+\varepsilon)\cdot k\cdot\cc_k\cdot \gamma_k}{n}+\frac{\varepsilon^2}{16}\gamma_k+\frac{\varepsilon^2 \gamma_k}{8}\right)\frac{n}{k\gamma_k}\\
	&\leq& \cc_k+\frac{1}{4}\varepsilon^2n,
\end{eqnarray*}
where in the above inequalities, we used the condition that $|\frac{|A_k|}{|A|}-\frac{k\cdot\cc_k}{n}|\leq \frac{\varepsilon^2}{16}$.

Therefore, with probability at least $1-\frac{\varepsilon\rho}{10}\cdot \frac{2}{\varepsilon}=1-\frac{\rho}{5}$, for all $k\leq \frac{2}{\varepsilon}$, $\cc_k-\frac{1}{8}\varepsilon^2 n\leq C_k\leq \cc_k + \frac{1}{4}\varepsilon^2 n$. This further implies that
$$\sum_{k=1}^{2/\varepsilon}\cc_k-\sum_{k=1}^{2/\varepsilon}\frac{1}{8}\varepsilon^2 n\leq \sum_{k=1}^{2/\varepsilon}C_k\leq \sum_{k=1}^{2/\varepsilon}\cc_k +\sum_{k=1}^{2/\varepsilon}\frac{1}{4}\varepsilon^2 n.$$
That is,
$$\cc_{\leq 2/\varepsilon}-\frac{1}{4}\varepsilon n\leq \sum_{k=1}^{2/\varepsilon}C_k\leq \cc_{\leq 2/\varepsilon} +\frac{1}{2}\varepsilon n.$$

The statement of the theorem then follows by taking the union bound and the fact that the total number of connected components with size at least $2/\varepsilon$ is at most $\varepsilon n/2$.

\section{Approximating the Weight of the Minimum Spanning Tree}

We now consider the problem of estimating the weight of the minimum spanning tree in random orer streams. 
The algorithm presented in this section as well as its analysis are given in \cite{CRT05:MST}. 
We only describe them for sake of completeness.

We assume that our input graph $G=(V,E)$ has edge weights $w_e \in \{1,\dots,W\}$ for
each edge $e\in E$. The edges together with their weights appear in random order. In order to
approximate the weight of the minimum spanning tree we use the following representation from 
\cite{CRT05:MST}. For each $t\in\{1,\dots, W-1\}$ we define the threshold graph $G^{(t)} = (V,E^{(t)})$
to be the graph that contains only edges of edge weight at most $t$, i.e. $E^{(t)} = \{e \in E : 
w_e \le t\}$. Let $c(t)$ be the number of connected components in $G^{(t)}$. Then we can use the
following formula from \cite{CRT05:MST} to express the weight of the minimum spanning tree $M$:
$$
M = n - W + \sum_{t=1}^{W-1} c(t) .
$$
We will use the algorithm from the previous section to obtain an estimate $\hat c(t)$ for each
of the $c(t)$ in order to obtain an estimate 
$$
\hat M = n - W + \sum_{t=1}^{W-1} \hat c(t)
$$
for the weight of the minimum spanning tree. We remark that $M$ is at least $n-1 \ge n/2$ for $n\ge 2$.
Thus, if we approximate each of the $\hat c(t)$ within an additive error of $\pm \varepsilon n/(4W)$ then 
we have an overall additive error of $\varepsilon n/2$, which translates to
$$
(1-\varepsilon) \cdot M \le \hat M \le (1+\varepsilon) M .
$$

Below we give the algorithm from \cite{CRT05:MST} for approximating MST weight in pseudocode.

\begin{algorithm}[H]
	\caption{Approximating the weight of MST}~\label{alg:MST_weight}
	\begin{algorithmic}[1]
		\Procedure{MSTWeight}{\Stream$(G)$,$W$,$\varepsilon,\rho$}	
		\For{$t$ from $1$ to $W-1$}
		\State $\hat{c}(t)\gets$ \textsc{NumCC}(\Stream$(G^{(t)})$,$\frac{\varepsilon}{4W}$, $\frac{\rho}{W}$)
		\EndFor
		\State $n\gets$ the number of vertices of $G$
		\State \Return $\hat M = n - W + \sum_{t=1}^{W-1} \hat c(t)$
		\EndProcedure
	\end{algorithmic}
\end{algorithm}

By Theorem~\ref{thm:num_cc}, the above discussion and the union bound, we have the following theorem.
\begin{theorem}
	Let $0 < \varepsilon < 1/2$. Given the edges of a connected graph $G=(V,E)$ together with their weights from $\{1,\dots, W\}$ in a random order, algorithm \textsc{MSTWeight} computes with probability at least $1-\rho$ an estimate $\hat M$ for the weight of the minimum spanning tree of $G$ that satisfies
	$$
	(1-\varepsilon) \cdot M \le \hat M \le (1+\varepsilon) M .
	$$
	The algorithm uses space $O((\frac{W}{\varepsilon})^{O(W^3 /\varepsilon^3)}(W/\rho)^{O(W/\varepsilon)})$, processes each update in $O((W/\varepsilon)^{O(W^3 /\varepsilon^3)} \cdot (W/\rho)^{O(W/\varepsilon)})$ time and computes the estimate in 
	$O((W/\varepsilon)^{O(W^3/\varepsilon^3)}(W/\rho)^{O(W/\varepsilon)})$ post-processing time.
\end{theorem}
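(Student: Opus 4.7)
The plan is to reduce the MST weight estimation to $W-1$ parallel runs of \textsc{NumCC} on the threshold graphs $G^{(t)}$, using the identity $M = n - W + \sum_{t=1}^{W-1} c(t)$ from \cite{CRT05:MST} as a black box. First I would argue that the stream processed by each instance is in fact a valid random order stream on $G^{(t)}$: when we filter a uniformly random permutation of $E$ by retaining only edges with weight at most $t$, the resulting subsequence is a uniformly random permutation of $E^{(t)}$. Thus each of the $W-1$ subroutine calls receives precisely the input distribution guaranteed by Theorem~\ref{thm:num_cc}. Operationally, as each weighted edge $(u,v,w_e)$ arrives in the stream for $G$, the algorithm forwards it to every instance indexed by $t \ge w_e$, so the $W-1$ instances can be executed in parallel with a single pass.

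Next I would set the accuracy parameter of each invocation to $\varepsilon' = \varepsilon/(4W)$ and the failure probability to $\rho' = \rho/W$. By Theorem~\ref{thm:num_cc}, each output $\hat c(t)$ satisfies $|\hat c(t) - c(t)| \le \varepsilon n/(4W)$ with probability at least $1 - \rho/W$. A union bound over the $W-1$ invocations shows that all estimates are simultaneously accurate with probability at least $1-\rho$.

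Condition on this event. Using the identity above,
\[
\hat M - M \;=\; \sum_{t=1}^{W-1} \bigl(\hat c(t) - c(t)\bigr),
\]
so $|\hat M - M| \le (W-1)\cdot \varepsilon n/(4W) \le \varepsilon n/4$. Since $G$ is connected, $M \ge n-1 \ge n/2$ for $n \ge 2$, so the additive error $\varepsilon n/4$ is at most $\varepsilon M/2 \le \varepsilon M$; this yields the claimed multiplicative approximation $(1-\varepsilon) M \le \hat M \le (1+\varepsilon) M$.

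Finally, the resource bounds follow by plugging $\varepsilon' = \varepsilon/(4W)$ and $\rho' = \rho/W$ into Theorem~\ref{thm:num_cc} and multiplying by $W-1$. This gives space $O\bigl((W/\varepsilon)^{O(W^3/\varepsilon^3)} (W/\rho)^{O(W/\varepsilon)}\bigr)$, and analogous bounds for per-update and post-processing time. Since the reduction is a straightforward parallel composition, there is no real technical obstacle; the one subtlety worth stating explicitly is the observation that a uniformly random permutation of $E$ restricted to any fixed subset is again uniformly random, which is what validates invoking \textsc{NumCC} independently on each of the induced substreams.
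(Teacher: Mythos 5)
Your proposal is correct and follows essentially the same route as the paper: invoke \textsc{NumCC} with accuracy $\varepsilon/(4W)$ and failure probability $\rho/W$ on each threshold graph $G^{(t)}$, apply a union bound, use the Chazelle--Rubinfeld--Trevisan identity $M = n-W+\sum_t c(t)$, and convert the additive error to a multiplicative one via $M\ge n/2$. You make explicit one point the paper leaves implicit --- that filtering a uniformly random permutation of $E$ to the edges of $E^{(t)}$ yields a uniformly random permutation of $E^{(t)}$, which is exactly what licenses invoking Theorem~\ref{thm:num_cc} on each substream --- but this is a clarification rather than a different approach.
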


%

\section{Approximating the Size of Maximum Independent Set (MIS) in Minor-Free Graphs and Beyond}\label{sec:mis_planar}
In this section, we transform the constant-time approximation algorithm for the size of maximum independent set of minor-free graphs into a constant-space random order streaming algorithm, and prove Theorem~\ref{thm:planar_mis}. Our transformation can also be extended to many other constant-time approximation algorithms.
We first introduce the following definitions on $d$-bounded $k$-discs. 
\begin{definition}
	Let $G=(V,E)$ be an undirected graph. Let $d\geq 1$ and $k\geq 0$. Let $G_{|d}=(V,E')$ be the subgraph of $G$ that is obtained by removing all edges that are incident to vertices with degree higher than $d$. Let $v\in V$ be a vertex. 
	The $d$-bounded $k$-disc of a vertex $v$, denoted by $\disc_{k,d}(v,G)$, is defined to be the subgraph rooted at $v$ and induced by all vertices that are within distance at most $k$ from $v$ in $G_{|d}$. Let $N=N_{d,k}$ denote the number of all possible $d$-bounded $k$-disc isomorphism types\footnote{We call two rooted graphs isomorphic to each other if there is a root-preserving isomorphic mapping from one graph to the other.}. Let $\calH_{k,d}=\{\Delta_1,\cdots,\Delta_N\}$ denote the set of all such types. 
\end{definition}
\newcommand{\calJ}{\mathcal{J}}

Note that if both $d,k$ are constant, then $N=N_{d,k}$ is also constant. For any graph $G$ and a $d$-bounded $k$-disc type $\Delta$, the frequency of $\Delta$ with respect to $G$ is defined to be the number of vertices in $G$ with $d$-bounded $k$-disc isomorphic to $\Delta$.

Now let us very briefly describe the constant-time approximation algorithm for the size of maximum independent set in minor-free graphs, with an additive error~$\varepsilon n$~\cite{HKNO09:local}. 
The idea is that one can first ``ignore'' all the edges incident to vertices with degree higher than $d=\Theta(1/\varepsilon)$ (since the total number of such vertices is $O(\varepsilon n)$, the size of MIS changes by at most $O(\varepsilon n)$). To check if a vertex $v$ has degree higher than $d$, one only needs to perform $O(d)$ queries to its adjacency list. Then in the resulting bounded degree graph $G_{|d}$, one could 1) sample $O(1/\varepsilon^2)$ vertices; 2) for each sampled vertex $v$, perform a query to a \emph{local partition oracle} $\mathcal{O}$, which returns the component $P(v)$, of some global partition $\calP_{G_{|d}}$, containing $v$. It is guaranteed that the oracle $\mathcal{O}$ only needs to perform $q= d^{O(\log^2(1/\varepsilon))}=(1/\varepsilon)^{O(\log^2(1/\varepsilon))}$ queries to the adjacency list of the graph; and at most $\varepsilon n/3$ edges lie between different parts of the partitioning (see~\cite{HKNO09:local,LR15:quasi} for details on local graph partition oracle); 3) compute the fraction $f$ of sampled vertices $v$ that belongs to the optimal independent set for $P(v)$ and return $\bar{f}:=f\cdot n$ as the estimator for the size of MIS of $G$. It can be shown that $\bar{f}$ approximates the size of MIS of $G$ with additive error $\varepsilon n$ \cite{HKNO09:local}.

A key observation is that for the above algorithm (actually, most existing constant-query approximation algorithms in the adjacency list model), the decisions are independent of the labeling of the vertices of $G$ and only depend on the structure of the corresponding $k$-discs of sampled vertices (and the random bits used by each vertex inside the $k$-disc). Then to simulate these algorithms, 
it suffices to have the distribution of $d$-bounded $k$-disc types of the corresponding graph for $k=\Theta(q)$. This is true since the above algorithm can be equivalently seen as first sampling a constant number of $k$-disc types with probability proportional to their frequencies, and then invoke the local graph partition oracle on the root inside the corresponding $k$-discs. 
More formally, let $G$ be a minor-free graph. For each $d$-bounded $k$-disc type $\Delta$, we let $\cc_\Delta$ denote the number of vertices $v$ with $\disc_{d,k}(v,G)\cong \Delta$. Suppose that for each $\Delta\in \calH_{d,k}$, we have some estimator $C_\Delta$ such that $|C_\Delta -\cc_{\Delta}|\leq \rho n$ for some small constant $\rho>0$. Now instead of sampling vertices from the original graph $G$, we sample a number of $d$-bounded $k$-disc types with probability proportional to their estimators. Then we invoke the local partition oracle on each sampled disc type $\Delta$ by taking the root $r$ as input. Finally, we can still utilize the information whether $r$ belongs to the MIS or not to obtain the same estimator as before. If our estimator $C_\Delta$'s are good enough, that is, $\rho$ is small enough constant, then we can still approximate the size of MIS of $G$ with additive error $\varepsilon n$.

In the following, we show how to approximate the frequency of $d$-bounded $k$-disc types in a graph $G$ that is defined by a random order edge stream. Our goal is to obtain for any $d$-bounded $k$-disc type $\Delta$ an estimator $C_\Delta$ such that $|C_\Delta -\cc_{\Delta}|\leq \rho n$ for any small constant $\rho>0$ by using constant space that only depends on $\rho,d,k$, and not on the size of $G$. 

\subsection{Canonical Extended $(d+1)$-Bounded $k$-Disc}
Since we do not know the degree of each vertex until we read all the edges from the stream, we cannot directly approximate the $\cc_\Delta$ in ``truncated'' graph $G_{|d}$, which is trivial in the query access model~\cite{HKNO09:local} while it is a major technical difficulty in the streaming setting. In the following, instead of directly considering the $k$-disc of $v$ in $G_{|d}$, we consider a slightly larger neighborhood of $v$ in $G$, which can be further used to approximate the frequency of $\Delta$.

More specifically, let $v\in G$ be any vertex with degree at most $d$.  We define the \emph{extended $(d+1)$-bounded $k$-disc of $v$} to be the subgraph that includes $\disc_{k,d}(v,G)$, and also some vertices with degree higher than $d$ in $G$ and a subset of $d+1$ edges for each such vertex. The \emph{canonical} extended $(d+1)$-bounded $k$-disc of $v$, denoted by $\CD_{k,d+1}(v,G)$, is one such disc that edges and vertices are added sequentially and in each time according to some lexicographical ordering over the labels of vertices. The formal definition of $\CD_{k,d+1}(v,G)$ is given in Algorithm~\ref{alg:bounded_disc}, in which we let $\radius(F)$ denote the maximum distance from any vertex in a rooted graph $F$ to its root $r$. 

\begin{algorithm}[H]
	\caption{Canonical extended $(d+1)$-bounded $k$-disc of $v$}~\label{alg:bounded_disc}
	\begin{algorithmic}[1]
		\Procedure{\textsc{CanoDisc}}{$G$,$v$,$k$,$d$}
		\If{$k=0$}
		\State \Return $F:=\{v\}$
		\EndIf	
		\State queue $Q\gets\emptyset$; push $v$ to $Q$
		\State create a graph $F$ with root vertex $v$ 
		\While{$Q$ is not empty}
		\State pop a vertex $u$ from $Q$
		\State let $w_1,\cdots,w_{\deg(u)}$ be neighbors of $u$ that are ordered lexicographically
		\For{$i=1$ to $\min\{\deg(u),d+1\}$}
		\If{$\radius(F+\{u,w_i\})\le k$}
		\State add vertex $w_i$ if $w_i\notin{V(F)}$
		\State add edge $(u,w_i)$ to $F$
		\State push $w_i$ to $Q$
		\EndIf
		\EndFor
		\EndWhile
		\State \Return $F$
		\EndProcedure
	\end{algorithmic}
\end{algorithm}   
We now define the canonical $\CD_{k,d+1}(v,G)$-edge ordering to be the order that the edges are added into the rooted graph $F$ in Algorithm~\ref{alg:bounded_disc}. 

Note that in order to estimate $\cc_\Delta$, it suffices to approximate the number $\cc_\Gamma$ of each extended $(d+1)$-Bounded $k$-Disc type $\Gamma$ with small additive error.

In the following, we will give algorithm for estimating $\cc_\Gamma$ with small additive error. When it is clear from the context, we will omit $G$ in the notations $\disc_{k,d}(v,G)$ and $\CD_{k,d+1}(v,G)$.

\subsection{Detecting a Vertex with Extended $(d+1)$-Bounded $k$-Disc Isomorphic to $\Gamma$}
Similar as the approach for estimating the number of connected component, we obtain our estimator $C_\Gamma$ by using the statistics of the fraction of sampled vertices which serve as a candidate for an extended $(d+1)$-bounded $k$-disc isomorphic to $\Gamma$. 

Starting from a vertex $v$, we collect edges in its $k$-disc from the stream as follows. We first create a rooted graph $F$ with a single root vertex $v$. Then we gradually add to $F$ new edges and vertices according to the lexicographical order, and for each vertex, we will add to $F$ at most $d+1$ adjacent edges and we stop until no more vertices or edges that are within distance $k$ from $v$ can be found. We need the following definition of violating edges which is almost the same as Definition~\ref{def:violating_tree} except here we are considering a rooted subgraph rather than just a rooted tree. We let $\dep(x)$ denote the length of the shortest path from the root of $F$ to a vertex $x$.

\begin{definition}
	Given a rooted $d$-bounded graph $F$ and an edge $e=(u,v)$ such that $e \notin E(F)$, we call $e$ a \emph{violating edge} for $F$ if either 
	\begin{enumerate}
		\item exactly one of $u,v$, say $u$, belongs to $V(F)$, and there exists a vertex $y\in V(F)$ such that $\dep(y)-\dep(u)\geq 2$ or an edge $(u,x)\in E(F)$ such that $\lab(x)>\lab(v)$; or
		\item both $u,v$ belong to $V(F)$, and their depths $\dep(u),\dep(v)$ are different, say $\dep(u)$ is smaller, and $\dep(v)-\dep(u)\geq 2$ or there exists an edge $(u,x)\in E(F)$ such that $\lab(x)>\lab(v)$.
	\end{enumerate}  
\end{definition}

The procedure of collecting $k$-discs is formally defined in Algorithm~\ref{alg:stream_k_disc}.

\begin{algorithm}[H]
	\caption{Collecting $k$-disc of $v$ from the stream}~\label{alg:stream_k_disc}
	\begin{algorithmic}[1]
		\Procedure{Stream\_CanoDisc}{\Stream$(G)$,$v$,$k$,$d$,$\Lambda$}	
		\If{$k=0$}
		\State{\Return F:=\{v\}}
		\EndIf
		\State create a graph $F$ with root vertex $v$
		\For{$(u,w)\gets$ next edge in the stream}
		\If{$(u,w)$ is a violating edge for $F$}
		\State{\Return \Bad}
		\ElsIf{$\radius(F+\{(u,w)\})\leq k$ and $\max\deg(F+\{(u,w)\})\leq d+1$}
		\State $V(F)\gets V(F)\cup\{u,w\}$
		\State $E(F)\gets E(F)\cup\{(u,w)\}$
		\State record the time-step $t_{(u,w)}$ of $(u,w)$
		\EndIf
		\EndFor
		\State $t_\ell\gets$ the time-step of the last edge added to $F$ 
		\If{$t_\ell > \Lambda$}
		\State \Return \Bad
		\Else
		\State \Return $F$
		\EndIf
		\EndProcedure
	\end{algorithmic}
\end{algorithm}

Note that if $v$ has extended $(d+1)$-bounded $k$-disc $\CD_{k,d}(v)\cong \Gamma$, and all edges are ordered in the canonical $\CD_{k,d}(v)$-edge ordering and appear in the first phase (i.e., the first $\Lambda$ edges) of the stream, then $\CD_{k,d}(v)$ will be fully collected and returned by the algorithm. 

On the other hand, if a vertex $u$ has $\CD_{k,d}(u)\ncong\Gamma$, then it still could happen that the observed graph contains a subgraph that is isomorphic to $\Gamma$, since some edges incident to high degree vertices (and also some edges incident to low degree vertices) appear earlier than they have been collected in the graph $F$. In general, if the algorithm does not return \Bad, we call the returned graph $F$ a \emph{lexicographical extended $(d+1)$-bounded $k$-disc (with respect to $\Gamma$)}, as the algorithm always collects edges in a lexicographical order (over the labels of corresponding vertices). Therefore, $\CD_{k,d}(v)$ is a special case of lexicographical disc.

\subsection{Approximating the Frequency of Extended $(d+1)$-Bounded $k$-Disc Type $\Gamma$}

Let $k,d$ be any integer. Let $\calJ_{k,d+1}$ denote the set of all $(d+1)$-bounded $k$-disc isomorphic types. Let $J=J(k,d)$ denote the number of types in $\calJ_{k,d+1}$. Note that $J\leq 2^{O((d+1)^{k+1})}$. Let $\Gamma\in \calJ_{k,d+1}$. Let $\gamma_\Gamma$ denote the probability that all edges in $\Gamma$ appears in the canonical order in the first phase of the stream, which again is determined by a threshold $\Lambda$ that is sampled from $\Bi(m,\tau)$. Note that $\gamma_\Gamma:=\frac{1}{t!}\cdot \tau^{t}$, where $t:=|E(\Gamma)|$. 



The details are described in Algorithm~\ref{alg:approx_disc_type}.
\begin{algorithm}[H]
	\caption{Approximating the frequency of extended $(d+1)$-bounded $k$-disc types}~\label{alg:approx_disc_type}
	\begin{algorithmic}[1]
		\Procedure{NumDisc}{\Stream$(G)$,$k,d$,$\delta,\rho$}	
		\State $\tau\gets O(\frac{\rho\delta}{J}\cdot (2d)^{-4k(d+1)^{2k}})$ 
		\State choose a number $\Lambda$ from $\Bi(m,\tau)$
		\State sample a set $A$ of $s:=\Theta((\frac{J}{\rho\delta})^{(d+1)^{2k}}\cdot (2d)^{4k(d+1)^{4k}})$ vertices uniformly at random
		\For{each $\Gamma\in \calJ_{k,d+1}$}
		\For{each $v\in A$}
		\If{\textsc{Stream\_CanoDisc}(\Stream($G$),$v$,$k$,$d$,$\Lambda$) returns $\Gamma$}
		
		\State $X_v=1$
		\Else
		\State{$X_v=0$}
		\EndIf
		\EndFor
		\State $\gamma_\Gamma\gets \frac{1}{t!}\cdot \tau^{t}$, where $t=|E(\Gamma)|$
		\State $C_\Gamma=\frac{\sum_{v\in A}X_v}{|A|}\cdot \frac{n}{\gamma_\Gamma}$
		\EndFor
		\State \Return $\{C_\Gamma: \Gamma\in \calJ_{k,d+1}\}$
		\EndProcedure
	\end{algorithmic}
\end{algorithm}


Our theorem is as follows.
\begin{theorem}~\label{thm:bounded_discs}
	Let $k,d,\delta,\rho>0$. There exists a random order streaming algorithm that with probability at least $1-\rho$, outputs the estimators $C_\Gamma$ such that 
	$$|C_\Gamma-\cc_\Gamma|\leq \delta n,$$
	for any $\Gamma\in \calJ_{k,d+1}$. The space complexity of the algorithm is $O((\frac{J}{\rho\delta})^{(d+1)^{2k}}\cdot (2d)^{4k(d+1)^{4k}})$.  
\end{theorem}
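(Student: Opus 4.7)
The plan is to mirror the proof of Theorem~\ref{thm:num_cc} closely, with ``vertex whose canonical extended $(d+1)$-bounded $k$-disc is isomorphic to $\Gamma$'' playing the role of ``vertex in a connected component of size $k$,'' and with $\CD_{k,d+1}(v)$ playing the role of the CBFS tree. First I sample the set $A$ of $s$ vertices uniformly at random, let $A_\Gamma := \{v\in A : \CD_{k,d+1}(v)\cong \Gamma\}$, and apply Chernoff to show that with probability at least $1-\rho/2$, simultaneously for every $\Gamma\in\calJ_{k,d+1}$, $\big||A_\Gamma|/|A|-\cc_\Gamma/n\big|\le \delta/(3J)$, and no two vertices of $A_\Gamma$ share an overlap in their canonical discs (analogous to Lemma~\ref{lemma:property_A}). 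I then condition on this good event and study a fixed $\Gamma$.

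Next I analyze $\E_{\sigma\sim\calU_{\calP[E]}}[X_v]$ in two cases, paralleling cases~\rom{2} and~\rom{3} of Theorem~\ref{thm:num_cc}. If $\CD_{k,d+1}(v)\cong\Gamma$, the canonical disc appears in its canonical edge ordering within the first phase with probability exactly $\gamma_\Gamma$, giving $\Pr[X_v=1]\ge \gamma_\Gamma$. For the upper bound, any other lexicographic extended $(d+1)$-bounded $k$-disc $F$ rooted at $v$ with $F\cong\Gamma$ forces at least one edge of the symmetric difference $E(\CD_{k,d+1}(v))\triangle E(F)$ to appear in a ``non-collectable'' position in the first phase, contributing a factor of at most $\tau$. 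The number of such alternative lex discs depends only on $d,k$, since each is a rooted subgraph on at most $(d+1)^k$ vertices of maximum degree $d+1$; a sufficiently small choice of $\tau$ therefore yields $\gamma_\Gamma \le \E[X_v] \le (1+\delta/(3J))\gamma_\Gamma$.

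If $\CD_{k,d+1}(v)\not\cong\Gamma$, I aim to establish the analogue of Lemma~\ref{lemma:large_CC_X_v}: $\E[X_v]\le \eta\gamma_\Gamma$ for $\eta=\Theta(\rho\delta/J)$. For any lex disc $F$ rooted at $v$ with $F\cong\Gamma$ but $F\not\cong \CD_{k,d+1}(v)$, there must be at least $b\ge 1$ witness edges in $G$ — either boundary edges of $V(F)$ in the $k$-hop neighborhood of $v$, or edges that reveal a collected vertex of $V(F)$ to have true degree exceeding $d+1$ — each of which must arrive in a non-collectable position in the first phase, contributing a factor of $\tau$. Grouping candidate discs $F$ by their witness count $b$ and bounding the number of such $F$'s by $(2d)^{O(k(d+1)^{k}b)}$ (the analogue of Lemma~\ref{lemma:tkb_upper}), a geometric-series calculation gives $\E[X_v]\le \sum_{b\ge 1} (2d)^{O(k(d+1)^{k}b)} \gamma_\Gamma \tau^b \le \eta\gamma_\Gamma$ for the chosen $\tau=\Theta((\rho\delta/J)(2d)^{-4k(d+1)^{2k}})$.

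Finally, set $Y_1=\sum_{v\in A_\Gamma}X_v$ and $Y_2=\sum_{v\in A\setminus A_\Gamma}X_v$. Because the prefix is defined by independent edge priorities, the variables $\{X_v\}_{v\in A_\Gamma}$ are (effectively) independent after excluding co-located sampled vertices, so Chebyshev on $Y_1$ together with Markov on $Y_2$ yield $\big|Y_1+Y_2-|A_\Gamma|\gamma_\Gamma\big|\le \delta|A|\gamma_\Gamma/(3J)$ with probability at least $1-\rho/(2J)$. Rescaling via $C_\Gamma=(Y_1+Y_2)\cdot n/(|A|\gamma_\Gamma)$ and combining with the sampling bound on $|A_\Gamma|$ gives $|C_\Gamma-\cc_\Gamma|\le \delta n$ for this fixed $\Gamma$, and a union bound over the $J$ types completes the proof. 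I expect the main obstacle to be the combinatorial counting in the second case: one must carefully control how high-degree vertices of $G$ can masquerade as degree-$(d+1)$ vertices inside the observed $F$ while still producing $F\cong\Gamma$. This unbounded-degree phenomenon has no analogue in Lemma~\ref{lemma:tkb_upper}, makes the enumeration of candidate $F$'s substantially more delicate than in the CC setting, and is precisely what drives the large $(2d)^{4k(d+1)^{4k}}$ factor in the stated space complexity.
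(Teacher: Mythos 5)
Your macro-structure is right---the two-case split on whether $\CD_{k,d+1}(v)\cong\Gamma$, the counting lemma grouping candidate lexicographic discs by the number of witness edges, and the geometric-series bound---and your parameter choices line up with the paper's. But there is a genuine gap in the variance analysis, and a related inconsistency in case~(\rom{1}).

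The variance gap is the serious one. You assert that the variables $\{X_v\}_{v\in A_\Gamma}$ are ``effectively independent after excluding co-located sampled vertices,'' where co-location is defined by overlap of \emph{canonical} discs. That analogy to Lemma~\ref{lemma:property_A} does not transfer: in the CC setting, when $u,v$ lie in disjoint components of size exactly $k$ the variables $X_u,X_v$ are functions of the priorities of disjoint edge sets and hence genuinely independent. Here, $X_u$ can equal $1$ via a lexicographic disc $D$ with $|\BE(D)|=b$ for arbitrarily large $b$, and the vertices/edges participating in the event $\calE_D$ live in the region $K_u(b,\Gamma)$, whose size grows as $(d+b+1)^{k+1}$ and can overlap $K_v(b',\Gamma)$ even when the canonical discs are far apart. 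Disjointness of canonical discs therefore gives you neither independence nor an obvious way to bound $\E[X_uX_v]$. The paper's proof handles this by conditioning on pairwise disjointness of $K_u(b,\Gamma)$ for all $b\le b_0$ with $b_0=\Theta(k(d+1)^{2k}\log_{1/\gamma_\Gamma}n)$ (this is the actual content of Lemma~\ref{lemma:A_for_disc}, not disjointness of canonical discs), then splits $\E[X_uX_v]$ into an ``independent'' part (discs confined to disjoint $K$-regions) and a ``dependent'' part (at least one disc with $b>b_0$), bounding the latter by $o_n(1)$ using Lemma~\ref{lemma:num_disc} and the choice of $b_0$. Without some version of this argument, the Chebyshev step on $Y_1$ is unjustified.

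A smaller issue: in case~(\rom{1}) you claim the number of alternative lexicographic discs rooted at $v$ and isomorphic to $\Gamma$ ``depends only on $d,k$.'' That is false---a high-degree neighbor of a vertex inside $\disc_{k,d}(v)$ can supply many distinct choices of $d+1$ incident edges, all yielding a truncated disc isomorphic to $\Gamma$, so the count is unbounded in terms of $d,k$ alone. The correct move, which the paper makes (and which you correctly make in case~(\rom{2})), is to group these discs by $b=|\BE(D)|$, apply the bound $y_{\Gamma,b}\le(d+b+1)^{3k(d+1)^{2k}}$ of Lemma~\ref{lemma:num_disc}, and sum $\sum_{b\ge1}y_{\Gamma,b}\tau^b\gamma_\Gamma$. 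You should also note that the unbounded-degree phenomenon you flag \emph{does} have an analogue in Lemma~\ref{lemma:tkb_upper} (there too, only vertices of degree $\le k-1+b$ can appear in a tree with boundary $b$); what is new here is the extra bookkeeping of redundant edges at truncated high-degree vertices, i.e.\ the $\RE(H_D)$ and $\BE(D)$ decomposition, which your outline does not introduce.
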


Before we present the proof of the above theorem, we use it to prove Theorem~\ref{thm:planar_mis}.
\begin{proof}[Proof of Theorem~\ref{thm:planar_mis}]
	In order to approximate the maximum independent set in the input minor-free graph with an additive error $\varepsilon n$ with probability $2/3$, we only need to set $d=O(1/\varepsilon)$ and $k=(1/\varepsilon)^{O(\log^2(1/\varepsilon))}$, and approximate the number of each $d$-bounded $k$-disc type $\Delta$ with an additive error $\delta n= \Theta(\varepsilon n)$. (See the discussion at the beginning of Section~\ref{sec:mis_planar}). This in turn can be done by approximating the number of each extended $(d+1)$-bounded $k$-disc type $\Gamma$ with the same additive error, for which we invoke Algorithm~\ref{alg:approx_disc_type} with the corresponding parameters. Note that all the parameters only depend on $\varepsilon$. Therefore, by Theorem~\ref{thm:bounded_discs} and that $J\leq 2^{O((d+1)^{k+1})}$, the space complexity of the algorithm is $2^{(1/\varepsilon)^{(1/\varepsilon)^{\log^{O(1)} (1/\varepsilon)}}}$. Finally, we note that the above algorithm easily yields a $(1+\varepsilon)$-approximation algorithm for the size of maximum independent set in any minor-free graph. This is true since the size of the maximum independent set in any such a graph is $\Omega(n)$.
\end{proof}

\subsection{Proof of Theorem~\ref{thm:bounded_discs}}
Now we present the proof of Theorem~\ref{thm:bounded_discs}. The space complexity can be analyzed similarly as in the proof of Theorem~\ref{thm:num_cc}. In the following, we focus on the correctness of the algorithm.

We first introduce the following definitions. For any lexicographical $(d+1)$-bounded $k$-disc $D$ collected from $v$ with respect to $\Gamma$ and any vertex $u$ with degree $d+1$ in $D$, we call an edge $(u,x)$ redundant for $u$ if it appears later than the last edge incident to $v$ we added to $D$. Note that since $D$ is lexicographical, it has been observed by Algorithm~\ref{alg:stream_k_disc} and that every redundant edge $(u,x)$ for a high degree vertex $u$ must satisfy that $\lab(x)>\lab(y)$, where $(u,y)$ is the last edge (i.e., the $(d+1)$-th edge) incident to $u$ that has been added to $D$.

We let $\D$ denote the subgraph that is constructed on the static graph $G$ in a way similar to Algorithm~\ref{alg:bounded_disc} without the restriction of collecting at most $d+1$ edges from each vertex, until we see the last edge that we added to $D$ obtained from Algorithm~\ref{alg:stream_k_disc} \textsc{Stream\_CanoDisc}. Let $H_D$ denote the number of vertices with degree $d+1$ in $D$, and let $\RE(H_D)$ denote all the redundant edges incident to vertices in $H_D$. Finally, we let $\BE(D)$ denote the set of edges in $E[\D]\setminus (\RE(H_D)\cup E[D])$ and we call edges in $\BE(D)$ \emph{Bad} edges with respect to $D$. One important property of $\BE(D)$ is that every edge in $\BE(D)$ must appear in the first phase (and also in the ``non-collectable'' position), that is, no edge in $\BE(D)$ was detected as a violating edge by \textsc{Stream\_CanoDisc}.

Note that if the canonical disc of $v$ is observed, i.e., $D=\CD_{k,d+1}(v)$, then $\BE(D)=\emptyset$. On the other hand, if $D$ is not the same as $\CD_{k,d+1}(v)$, then $\BE(D)$ must be non-empty.

\begin{definition}
	Let $\Gamma$ be any extended $(d+1)$-bounded $k$-disc type. We let $\calD_\Gamma(v)$ denote the set of all lexicographical discs rooted at $v$ with respect to $\Gamma$. We let $\calD_{\Gamma,b}(v)$ denote the set of all lexicographical discs $D$ rooted at $v$ with respect to $\Gamma$ and with $|\BE(D)|=b$. Let $y_{\Gamma,b}=|\calD_{\Gamma,b}(v)|$.
\end{definition}

We have the following lemma.
\begin{lemma}~\label{lemma:num_disc}
	Let $v$ be a vertex with degree at most $d$ and let $\Gamma$ be any extended $(d+1)$-bounded $k$-disc type. Then it holds that for any $b\geq 0$, 
	$$y_{\Gamma,b}\leq \kappa(d,b,k):=(d+b+1)^{3k(d+1)^{2k}}.$$
\end{lemma}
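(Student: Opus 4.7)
The strategy is to count lex discs $D \in \calD_{\Gamma,b}(v)$ by incrementally specifying their labeled structure in canonical BFS order, bounding the number of label choices at each step. First observe that since every vertex of $D$ has degree at most $d+1$ (enforced by Algorithm~\ref{alg:stream_k_disc}) and lies within BFS distance $k$ from $v$, we have $|V(D)| \le \sum_{i=0}^k (d+1)^i \le (d+1)^{k+1}$. Hence the overall count decomposes into at most $|V(D)|$ vertex-addition steps (plus a bounded number of cross-edge specifications), and it suffices to bound the number of label choices at each step.

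The key structural claim that I will prove is: for every $u \in V(D) \setminus H_D$, the degree of $u$ in $\D$ satisfies $\deg_\D(u) \le d + b$. Indeed, by construction $\RE(H_D)$ only contains edges incident to vertices in $H_D$, so any edge incident to such $u$ that lies in $E[\D]\setminus E(D)$ must belong to $\BE(D)$; combined with $|\BE(D)|=b$ and $\deg_D(u)\le d$, the bound follows. As a consequence, whenever the BFS exploration extends from a vertex $u \notin H_D$, the next neighbor's label is chosen from a set of size at most $d+b+1$.

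The remaining task is to handle vertices $u \in H_D$, whose degree in $\D$ is not a priori bounded by $d+b$ since they may have arbitrarily many redundant edges. The plan is to exploit the fact that the $d+1$ edges of $u$ in $D$ are forced to be the lexicographically smallest $d+1$ among $u$'s neighbors in $\D$, so that once the labels of any previously chosen ``low-$D$-degree'' vertices along the BFS path to $u$ are fixed, the admissible label sequences for the $D$-neighbors of $u$ are themselves controlled by the same $(d+b+1)$-bounded candidate sets arising from the structural claim at the preceding layer. A layer-by-layer accounting, multiplying at most $(d+b+1)$ choices over each of at most $|V(D)|$ vertex additions, each of the at most $k$ BFS layers, and absorbing an $O(|V(D)|)$ multiplicative overhead for repeating this bookkeeping across all vertex pairs (so as to handle cross edges as well), yields the claimed bound $(d+b+1)^{3k(d+1)^{2k}}$. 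The main obstacle is precisely this handling of $H_D$ vertices: avoiding any multiplicative dependence on their unbounded $\D$-degree requires tying their admissible labels back, via the lex-smallest-first rule, to the $(d+b+1)$-bounded choices already paid for at earlier BFS layers.
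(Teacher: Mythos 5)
There is a genuine gap, and it lies exactly at the step you flag as ``the main obstacle.'' Your claim that for $u\in H_D$ the $d+1$ edges of $u$ in $D$ are \emph{the lexicographically smallest $d+1$} among $u$'s neighbors in $\D$ is false. A bad edge $(u,x)\in\BE(D)$ is in $E[\D]$ but not in $E[D]$ and not redundant; such an edge can have $\lab(x)$ \emph{smaller} than the label of some $D$-edge of $u$ (for instance when $(u,x)$ arrives in the stream before $u$ enters $F$, or when it arrives while both endpoints are already in $F$ at the same depth, so it is neither added nor detected as violating). Since $|\BE(D)|=b$, the correct statement is only that the $D$-edges of $u$ lie among the lexicographically smallest $d+b+1$ neighbors of $u$: if the $(d+b+2)$-nd smallest neighbor were in $D$, then at least $b+1$ lex-smaller edges of $u$ would be missing from $D$, none of them redundant (they are lex-smaller than the last $D$-edge of $u$), so all would be bad edges, contradicting $|\BE(D)|=b$. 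With ``$d+1$'' replaced by ``$d+b+1$'' your per-vertex choice count becomes $\binom{d+b+1}{d+1}$ rather than $1$, which is fine for the target bound, but as written the plan rests on a false structural fact, and you explicitly say you have not resolved it.

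Beyond that, the route you take (incrementally specifying labels level by level, with an informally ``absorbed'' $O(|V(D)|)$ overhead for cross edges) is much more delicate than what the paper actually does, and your sketch never closes the cross-edge accounting. The paper's proof is essentially the corrected version of your two structural observations packaged cleanly: it shows that every $D\in\calD_{\Gamma,b}(v)$ is a subgraph of a single fixed graph $K_v$, namely the canonical extended disc built by Algorithm~\ref{alg:bounded_disc} with parameters $(G,v,k,d+b)$, because for every vertex (whether in $H_D$ or not) only its $d+b+1$ lexicographically smallest neighbors can participate in any such $D$. Then $|V(K_v)|\le (d+b+1)^{k+1}$, so $|E(K_v)|\le (d+b+1)^{2(k+1)}$, and since $|E(\Gamma)|\le (d+1)^{k+2}$ one bounds the number of subgraphs of $K_v$ isomorphic to $\Gamma$ by $\binom{|E(K_v)|}{|E(\Gamma)|}\le (d+b+1)^{3k(d+1)^{2k}}$. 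This containment-then-count argument avoids all the per-layer and cross-edge bookkeeping you were trying to set up. I would recommend you fix the $H_D$ claim and then switch to the containment argument rather than the incremental one.
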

\begin{proof}
	Let us consider any lexicographic disc $D\in \calD_{\Gamma,b}(v)$. Note that besides vertices in $H_D$, all the other vertices in $D$ must have degree at most $b+d+1$. On the other hand, for each high degree vertex $u$ in $H_D$, only the first at most $b+d+1$ edges incident to $u$ (according to the lexicographical order over the other endpoints) can appear in $D$, as otherwise there will be more than $b$ redundant edges for $u$, which is a contradiction. 
	
	If we let $K_v$ denote subgraph that only contains vertices and edges that could be part of some lexicographic disc $D$ with $|\BE(D)|=b$, which can be constructed by Algorithm~\ref{alg:bounded_disc} with parameters $G,v,k,d+b$. Note that the number of vertices in $K_v$ can be at most $1+(d+b+1)+\cdots+(d+b+1)^k\leq (d+b+1)^{k+1}$ vertices. The total number of edges in $K_v$ is thus at most $(d+b+1)^{2(k+1)}$. Finally, the total number of subgraph $S$ of $K_v$ that is isomorphic to $\Gamma$ is at most $\binom{(d+b+1)^{2(k+1)}}{|E(\Gamma)|}\leq (d+b+1)^{3k(d+1)^{2k}}$ as $|E(\Gamma)|\leq (d+1)^{k+2}$. This further implies that $y_{\Gamma,b}\leq (d+b+1)^{3k(d+1)^{2k}}$.
\end{proof}

Let $v\in A$. We again write $A\sim \calU_V$ to indicate that $A$ is a set of $s$ vertices sampled uniformly at random from $V$. We write $\sigma \sim\calU_{\calP[E]}$ to indicate that an edge ordering $\sigma$ is sampled from the uniform distribution over the set $\calP[E]$ of all permutations of edges. For any extended $(d+1)$-bounded $k$-disc type $\Gamma$, we let $K_v(b,\Gamma)$ denote the subgraph that only contains vertices and edges that could be some lexicographic extended disc $D$ with respect to $\Gamma$ and with $|\BE(D)|=b$. Note that from the proof of Lemma~\ref{lemma:num_disc}, we know that for $|V(K_v(b,\Gamma))|\leq (d+b+1)^{k+1}$.

Now for any $(d+1)$-bounded $k$-disc type $\Gamma$, we let $A_\Gamma\subseteq A$ denote the subset of vertices $v$ in $A$ such that $\CD_{k,d+1}(v)\cong\Gamma$. We have the following lemma. 

\begin{lemma}~\label{lemma:A_for_disc}
	With probability (over the randomness of $A\sim \calU_V$) at least $1-\frac{\rho}{5}$, it holds that for any $\Gamma$, $|\frac{|A_\Gamma|}{|A|}-\frac{\cc_\Gamma}{n}|\leq \frac{\delta}{4}$; and for any two vertices $u,v\in A$ with $\CD_{k,d+1}(u)\cong \Gamma$ and $\CD_{k,d+1}(v)\cong \Gamma$, $K_u(b_1,\Gamma)$ has no intersection with $K_v(b_2,\Gamma)$, for any $b_1,b_2\leq b_0:=5k(d+1)^{2k}\log_{1/\gamma_\Gamma}(n)$.
\end{lemma}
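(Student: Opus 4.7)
My plan is to prove the two claims of the lemma separately and then combine them by a union bound at the end, exploiting in both parts the size bound $|V(K_v(b,\Gamma))|\leq (d+b+1)^{k+1}$ already established in the proof of Lemma~\ref{lemma:num_disc}.

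For the first claim, I will mirror the concentration argument in the proof of Lemma~\ref{lemma:property_A}. Fix any $\Gamma\in \calJ_{k,d+1}$. Since $A\sim\calU_V$ is a uniform random sample, the indicators $\{\mathbb{1}[\CD_{k,d+1}(v)\cong \Gamma]:v\in A\}$ are i.i.d.\ Bernoulli variables with mean $\cc_\Gamma/n$, so Hoeffding's inequality gives
\[
\Pr\Bigl[\,\Bigl|\tfrac{|A_\Gamma|}{|A|}-\tfrac{\cc_\Gamma}{n}\Bigr|>\tfrac{\delta}{4}\,\Bigr]\leq 2\exp(-s\delta^{2}/8).
\]
The chosen sample size $s=\Theta((J/(\rho\delta))^{(d+1)^{2k}}(2d)^{4k(d+1)^{4k}})$ makes this at most $\rho/(10J)$, and a union bound over the $J$ possible types of $\Gamma$ contributes at most $\rho/10$ to the overall failure probability.

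For the second claim the key observation is that $b_0=5k(d+1)^{2k}\log_{1/\gamma_\Gamma}(n)=O(\log n)$ (since $\gamma_\Gamma$ is a constant in $n$), so every extended disc $K_v(b_0,\Gamma)$ has only polylogarithmically many vertices. An intersection $V(K_u(b_0,\Gamma))\cap V(K_v(b_0,\Gamma))\neq\emptyset$ requires a witness $w$ contained in both canonical extended discs. Fixing $v$, summing over the at most $(d+b_0+1)^{k+1}$ candidate witnesses $w\in V(K_v(b_0,\Gamma))$, and bounding for each $w$ the quantity $n_w:=|\{u:w\in V(K_u(b_0,\Gamma))\}|$, I would conclude that the fraction of $u\in V$ in conflict with $v$ is $\poly\log(n)/n$. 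A union bound over the $\binom{s}{2}$ sample pairs and the $J$ types of $\Gamma$ then gives failure probability at most $\rho/10$, provided $n$ exceeds a constant depending on $d,k,\delta,\rho$ (in the opposite regime the entire graph already fits in the space budget). Adding the two failure bounds yields the desired $1-\rho/5$ success probability.

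The main obstacle is precisely the bound on $n_w$: with unbounded vertex degrees, the canonical BFS could in principle reach $w$ from many starting vertices $u$ via short paths, and any worst-case per-pair bound is far too weak. My plan is to exploit that every intermediate vertex on a canonical BFS path has the next vertex among its first $d+b_0+1$ lexicographic neighbors, together with the global identity $\sum_w n_w=\sum_u|V(K_u(b_0,\Gamma))|\leq n(d+b_0+1)^{k+1}$. A double-counting argument over (witness, sample pair) configurations, rather than a supremum-based per-pair bound, should then push the probability that $A$ contains a conflicting pair below the required threshold and close the argument.
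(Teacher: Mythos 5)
Your treatment of the first claim (Chernoff/Hoeffding plus a union bound over the $J$ disc types) matches the paper, which dispatches it in one sentence as ``a straightforward application of Chernoff bound.'' No issue there.

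For the second claim you correctly identify the crux, but your plan does not close the gap, and it is worth being precise about why. The paper's own proof is one line: ``it suffices to note that $|\bigcup_{b\leq b_0}V(K_v(b,\Gamma))|/n \le o_n(1)$.'' That size bound controls, for a fixed $v$, the probability that a second random sample $u$ lands \emph{inside} $K_v(b_0,\Gamma)$; it does \emph{not} by itself control the probability that $K_u(b_1,\Gamma)$ and $K_v(b_2,\Gamma)$ merely \emph{share a vertex}, which is what the lemma asserts. Your observation that this requires bounding $n_w = |\{u : w \in V(K_u(b_0,\Gamma))\}|$ is therefore the right diagnosis of what is being glossed over. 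However, the double-counting you propose only yields the first-moment identity $\sum_w n_w \le n(d+b_0+1)^{k+1}$. The quantity that actually governs $\Pr[K_u\cap K_v\ne\emptyset]$ for two uniform samples is $n^{-2}\sum_w n_w^2$, and the first moment cannot bound this: a single ``hub'' $w$ with $n_w=\Theta(n)$ (for example, a high-degree vertex that is the unique lexicographically-first neighbor of $\Theta(n)$ low-degree vertices $u$ with $\CD_{k,d+1}(u)\cong\Gamma$) already pushes $\sum_w n_w^2$ to $\Theta(n^2)$ while keeping $\sum_w n_w$ linear. The forward-boundedness of the canonical BFS (each step uses one of the first $d+b_0+1$ lexicographic neighbors) constrains the out-direction $u\to w$ but says nothing about how many $u$ can reach a given $w$, because the reverse edge $w\to u$ need not be among $w$'s first $d+b_0+1$ neighbors. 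So the symmetry you would need is absent, and the plan as stated does not go through. In short: you are more explicit than the paper about the obstacle, and correctly so, but neither your sketch nor the paper's one-line argument actually supplies the missing step of turning $|K_v(b_0,\Gamma)|=o(n)$ into disjointness of $K_u$ and $K_v$. If you want to repair this, you would need either a genuinely two-sided (reverse as well as forward) boundedness property, or to weaken the disjointness claim to what the subsequent variance calculation actually uses and argue directly about the covariance contribution of overlapping pairs.
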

\begin{proof}
	The first part is a straightforward application of Chernoff bound. For the second part, it suffices to note that $|\bigcup_{b\leq b_0}V(K_v(b,\Gamma))| /n \leq o_n(1)$ by our setting of $b_0$.
\end{proof}

In the following, we will condition on the above two events about the sampled set $A$ in Lemma~\ref{lemma:A_for_disc}, which occur with probability at least $1-\frac{\rho}{5}$. 

Now let us consider a fixed disc type $\Gamma$ and our estimator $C_\Gamma$ for the number of extended $(d+1)$-bounded $k$-disc type $\Gamma$. For any vertex $v\in A$, we distinguish the following two cases.

\textbf{(\rom{1}) $\CD_{k,d+1}(v,G)\cong \Gamma$.} In this case $X_v=1$ will happen if either 1) the canonical disc $\CD_{k,d+1}(v,G)$ is observed in the first part of the stream, or 2) some other lexicographical disc $D$ is observed such that $D \cong \Gamma$ while $D$ is different from $\CD_{k,d+1}(v,G)$. Note that the first event happens with probability $\gamma_{\Gamma}$. 

For the second event, we note that since $D$ is observed and $D\cong \Gamma$, this means all the edges in $D$ appeared in some lexicographical order and in the first phase of the stream, which happens with probability $\gamma_{\Gamma}$. Now for each such $D$ with the corresponding extended $\D$ (see the above definition) and $|\BE(D)|=b$, it holds that $b\geq 1$, and all the edges in $\BE(D)$ appeared in the ``non-collectable'' order in the first phase of the stream, which happens with probability at most $\tau^b$. If we let $\calE_{D}$ denote the event that the lexicographical disc $D$ such that $D\cong \Gamma$ has been observed and let $\eta=\Theta(\rho\delta/J)$, then by Lemma~\ref{lemma:num_disc}, the probability that the second event occurs is
\begin{eqnarray}
\Pr_{\sigma \sim\calU_{\calP[E]}}[\bigcup_{\substack{D\sim \Gamma\\
		\textrm{$D$ is not canonical}}} \calE_D]
&\leq&\bigcup_{\substack{D\sim \Gamma \\
		\textrm{$D$ is not canonical}}}\Pr_{\sigma \sim\calU_{\calP[E]}}[\calE_D] \nonumber \\
&\leq & \sum_{b\geq 1} \tau^b (d+b+1)^{3k(d+1)^{2k}} \cdot \gamma_\Gamma \nonumber\\ &\leq& \eta\cdot \gamma_\Gamma, \label{eqn:bad_disc_event}
\end{eqnarray}
where the last inequality follows from similar calculation as we have done in the proof of Theorem~\ref{thm:num_cc} and our current choice of parameters.

Therefore, in this case,
$$\gamma_\Gamma\leq \Pr_{\sigma \sim\calU_{\calP[E]}}[X_v=1]\leq (1+\eta)\gamma_\Gamma.$$

%
%
%
%
%
%
%
%
%
%
%
%
%
%
%

\textbf{(\rom{2}) $\CD_{k,d+1}(v,G)\ncong \Gamma$.} Then $X_v=1$ will happen if some lexicographical disc $D$  has been observed such that $D\cong \Gamma$. By our assumption, any such $D$ must satisfy that $|\BE(D)|\geq 1$. Thus, the probability that such an event occurs with the same probability as given by formula~(\ref{eqn:bad_disc_event}). Therefore, in this case 
$$\Pr_{\sigma \sim\calU_{\calP[E]}}[X_v=1]\leq \eta \gamma_\Gamma.$$

Now let $Y_1:=\sum_{v\in A_\Gamma} X_v$. It holds that $\E_{\sigma\sim\calU_{\calP[E]}}[Y_1]=\sum_{v\in A_\Gamma} \E_{\sigma\sim\calU_{\calP[E]}}[X_v]$ and thus that $|A_\Gamma|\gamma_\Gamma\leq \E_{\sigma\sim\calU_{\calP[E]}}[Y_1]\leq (1+\eta)|A_\Gamma|\gamma_\Gamma$. Furthermore, recall that we have conditioned on the event that for any two vertices $u,v\in A_\Gamma$, $K_u(b_1,\Gamma)$ has no intersection with $K_v(b_2,\Gamma)$, for any $b_1,b_2\leq b_0$. Now for any vertex $u\in A_\Gamma$, we let $\calE_{D_u}$ denote the event that the lexicographical disc $D_u$ such that $D_u\sim \Gamma$ has been observed. For any two vertices $u,v\in A_\Gamma$, we let $\calE_{D_u,D_v}$ denote the event that both lexicographical discs $D_u$ and $D_v$ such that $D_u\sim \Gamma$ and $D_v\sim \Gamma$ are observed. 
\begin{eqnarray*}
	&&\Pr_{\sigma \sim\calU_{\calP[E]}}[X_uX_v=1] 
	\\&=&
	\Pr_{\sigma\sim\calU_{\calP[E]}}[\bigcup_{D_u\in \calD_{\Gamma}(u), D_v\in \calD_{\Gamma}(v)}\calE_{D_u,D_v}]\\
	&\leq&
	 \sum_{D_u\in \calD_{\Gamma}(u), D_v\in \calD_{\Gamma}(v)}\Pr_{\sigma\sim\calU_{\calP[E]}}[\calE_{D_u,D_v}]\\
	&=& 
	\sum_{\substack{D_u\in \calD_{\Gamma}(u), D_v\in \calD_{\Gamma}(v)\\ \textrm{$D_u,D_v$ independent}}}\Pr_{\sigma\sim\calU_{\calP[E]}}[\calE_{D_u,D_v}] 	 
	+ \sum_{\substack{D_u\in \calD_{\Gamma}(u), D_v\in \calD_{\Gamma}(v)\\ \textrm{$D_u,D_v$ dependent}}}\Pr_{\sigma\sim\calU_{\calP[E]}}[\calE_{D_u,D_v}]\\
	&\leq &
	\sum_{\substack{D_u\in \calD_{\Gamma}(u), D_v\in \calD_{\Gamma}(v)\\ \textrm{$D_u,D_v$ independent}}}\Pr_{\sigma\sim\calU_{\calP[E]}}[\calE_{D_u}]\Pr_{\sigma\sim\calU_{\calP[E]}}[\calE_{D_v}]
	+ \sum_{\substack{D_u\in \calD_{\Gamma,b}(u), D_v\in \calD_{\gamma,b'}(v)\\ b,b'\geq b_0}} \Pr_{\sigma\sim\calU_{\calP[E]}}[\calE_{D_v}]\\
	&\leq& \sum_{\substack{D_u\in \calD_{\Gamma}(u), D_v\in \calD_{\Gamma}(v)\\ \textrm{$D_u,D_v$ independent}}}\Pr_{\sigma\sim\calU_{\calP[E]}}[\calE_{D_u}]\Pr_{\sigma\sim\calU_{\calP[E]}}[\calE_{D_v}] 
	+ \sum_{b\geq b_0} \kappa(d,b,k)\sum_{b'\geq b_0} \kappa(d,b,k)\cdot \gamma_\Gamma^{b'}\\
	&\leq& (\sum_{D_u\in \calD_{\Gamma}(u)}\Pr_{\sigma\sim\calU_{\calP[E]}}[\calE_{D_u}])^2 + n^{4k(d+1)^{2k}}\cdot \gamma_\Gamma^{b_0}\\
	&\leq& ((1+\eta)\cdot \gamma_\Gamma)^2 + o_n(1)
\end{eqnarray*}
where the last inequality follows from the fact that $\calD_{\Gamma}(u) = \{D:D\cong\Gamma, \textrm{$D$ is not canonical}\}\cup \CD_{k,d+1}(u)$, the inequality~(\ref{eqn:bad_disc_event}) and that $b_0=5k(d+1)^{2k}\log_{1/\gamma_\Gamma}(n)$.

Therefore, 
\begin{eqnarray*}
	\Var_{\sigma\sim\calU_{\calP[E]}}[Y_1] &=&	\E_{\sigma\sim\calU_{\calP[E]}}\left[Y_1^2\right] - \E_{\sigma\sim\calU_{\calP[E]}}\left[Y_1\right]^2\\
	&\le&\E_{\sigma\sim\calU_{\calP[E]}}\left[(\sum_{v\in A_\Gamma}  X_v)^2\right] - |A_\Gamma|^2 \gamma_\Gamma^2\\
	&=&\E_{\sigma\sim\calU_{\calP[E]}}\left[\sum_{u,v\in A_\Gamma}X_uX_v\right] - |A_\Gamma|^2 \gamma_\Gamma^2 \\
	&=&\E_{\sigma\sim\calU_{\calP[E]}}\left[\sum_{u\in A_\Gamma}X_u^2 + \sum_{v\in A_\Gamma}\sum_{u\in A_\Gamma\setminus\{v\}}X_uX_v\right]
	 - |A_\Gamma|^2 \gamma_\Gamma^2\\
	&\leq& \sum_{u\in A_\Gamma} \E_{\sigma\sim\calU_{\calP[E]}}[X_u^2]+ |A_\Gamma|^2  ((1+\eta)\cdot \gamma_\Gamma)^2 +o_n(1)
	 - |A_\Gamma|^2 \gamma_\Gamma^2\\
	&\leq& |A_\Gamma|\cdot \gamma_\Gamma + |A_\Gamma|^2\cdot \gamma_\Gamma^2 \cdot  ((1+\eta)^2-1) + o_n(1) \\
	&\leq& |A_\Gamma|\cdot \gamma_\Gamma + 3 \eta \cdot |A_\Gamma|^2 \gamma_\Gamma^2,
\end{eqnarray*}
where the last inequality follows from our choice of parameters.

Thus, by Chebyshev inequality, 
\begin{eqnarray*}
	\Pr_{\sigma\sim\calU_{\calP[E]}}[|Y_1-\E[Y_1]|\geq \frac{\delta \gamma_\Gamma|A|}{4}] 
	\leq  \frac{16\cdot (\gamma_\Gamma|A_\Gamma| + 3\eta |A_\Gamma|^2 \gamma_\Gamma^2)}{\delta^2 |A|^2\gamma_\Gamma^2} 
	&\leq& \frac{\rho}{10 J}, 
\end{eqnarray*}
where in the last inequality, we used the fact that $|A_\Gamma|\leq |A|$ and that $|A|=\Theta((\frac{J}{\rho\delta})^{(d+1)^{2k}}\cdot (2d)^{4k(d+1)^{4k}})=\Omega(\frac{J}{\rho\delta^2\gamma_\Gamma})$.
This further implies that with probability at least $1-\frac{\rho}{10J}$, 
$$|A_\Gamma|\gamma_\Gamma - \frac{\delta \gamma_\Gamma|A|}{4}\leq Y_1\leq(1+\varepsilon)|A_\Gamma|\gamma_\Gamma +\frac{\delta \gamma_\Gamma|A|}{4}.$$

Now let $Y_2:=\sum_{v\in A\setminus A_\Gamma} X_v$. Then $\E_{\sigma\sim\calU_{\calP[E]}}[Y_2]=\E_{\sigma\sim\calU_{\calP[E]}}[\sum_{v\in A\setminus A_\Gamma}X_v]\leq |A|\cdot \eta\cdot \gamma_\Gamma$, where the last inequality holds since $\E_{\sigma\sim\calU_{\calP[E]}}[X_v]\leq \eta\gamma_\Gamma$ for any $v\in A\setminus A_\Gamma$.

By Markov's inequality, we have that 
\begin{eqnarray*}
	\Pr_{\sigma\sim\calU_{\calP[E]}}[Y_2\geq \frac{\delta \gamma_\Gamma|A|}{4}] \leq \frac{4\cdot \E[Y_2]}{\delta |A|\gamma_\Gamma} \leq \frac{\rho}{10J},
\end{eqnarray*}
where in the last inequality we used the fact that $\eta=O(\rho\delta/J)$.

Therefore, with probability at least $1-\frac{\rho}{5J}$, 
\begin{eqnarray*}
	|A_\Gamma|\gamma_\Gamma - \frac{\delta \gamma_\Gamma|A|}{4}
	\leq 
	Y_1+Y_2\leq 
	(1+\varepsilon)|A_\Gamma|\gamma_\Gamma +\frac{\delta \gamma_\Gamma|A|}{2}
\end{eqnarray*}

Now since $C_\Gamma=\frac{\sum_{v\in A}X_v}{|A|}\frac{n}{\gamma_\Gamma}=\frac{Y_1+Y_2}{|A|}\frac{n}{\gamma_\Gamma}$, we have that with probability at least $1-\frac{\rho}{5J}$, 
\begin{eqnarray*}
C_\Gamma&\geq& \frac{|A_\Gamma|\gamma_\Gamma-\frac{\delta|A|\gamma_\Gamma}{4}}{|A|}\frac{n}{\gamma_\Gamma}
\geq \left(\frac{\cc_\Gamma\cdot \gamma_\Gamma}{n}-\frac{\delta\gamma_\Gamma}{4}-\frac{\delta\gamma_\Gamma}{4}\right)\frac{n}{\gamma_\Gamma}
\geq
 \cc_\Gamma-\frac{1}{2}\delta n,
\end{eqnarray*}
 and 
\begin{eqnarray*}
C_\Gamma\leq \frac{(1+\eta)|A_\Gamma|\gamma_\Gamma+\frac{\delta|A|\gamma_\Gamma}{2}}{|A|}\frac{n}{\gamma_\Gamma}
&\leq&
 \left(\frac{(1+\eta)\cdot \cc_\Gamma\cdot \gamma_\Gamma}{n}+\frac{\delta\gamma_\Gamma}{4}+\frac{\delta\gamma_\Gamma}{2}\right)\frac{n}{\gamma_\Gamma}\\
&\leq& \cc_\Gamma+\delta n,
\end{eqnarray*}
where in the above inequalities, we used the condition that $|\frac{|A_\Gamma|}{|A|}-\frac{\cc_\Gamma}{n}|\leq \frac{\delta}{4}$.

Therefore, with probability at least $1-\frac{\rho}{5J}\cdot J=1-\frac{\rho}{5}$, for all $\Gamma\in \calJ_{d,k+1}$, $$\cc_\Gamma-\frac{1}{2}\delta n\leq C_\Gamma\leq \cc_\Gamma + \delta n.$$

The statement of the theorem then follows by taking the union bound of the error probabilities over the randomness of $A\sim \calU_V$ and $\sigma\sim\calU_{\calP[E]}$.

\section{Conclusions and Open Problems}

We have introduced a new technique to transform constant time algorithms into random order
streaming algorithms. It would be interesting to see other examples when our technique can be
used. Interesting candidates are for example random walk based algorithms. For example, it
could be interesting to estimate the return probabilities of random walks in random order streams.
It would also be interesting to show that any constant time property tester for graphs with
bounded average degree (or graph classes with bounded average degree) can be simulated
in random order streams. In order to prove such a statement it is first needed to define a canonical
property tester for graphs that can contain vertices of arbitrary degree.

Besides getting new upper bound, it will also be interesting to obtain lower bounds for random 
order streams. It seems to be plausible to conjecture that approximating the number of connected
components requires space exponential in $1/\epsilon$. It would be nice to have lower bounds that
confirm this conjecture.

\bibliographystyle{alpha}
\bibliography{literature}

\end{document}